\crefname{subsection}{subsection}{subsections}
\renewcommand{\epsilon}{\varepsilon}
\newtheorem{theorem}{Theorem}[section]
\newtheorem{corollary}[theorem]{Corollary}
\newtheorem{lemma}[theorem]{Lemma}
\newtheorem{proposition}[theorem]{Proposition}
\theoremstyle{definition}
\newtheorem{definition}{Definition}[section]
\theoremstyle{remark}
\newtheorem*{remark}{Remark}
\begin{document}
\title{The Asymmetric Colonel Blotto Game}
\author{Simon Rubinstein-Salzedo}
\address{Simon Rubinstein-Salzedo: Euler Circle, Palo Alto, CA 94306, USA}
\email{simon@eulercircle.com}
\author{Yifan Zhu}
\address{Yifan Zhu: Shanghai Foreign Language School, Shanghai 200083, China}
\email{fanzhuyifan@gmail.com}
\maketitle

\begin{abstract} 
  This paper explores the Nash equilibria of a variant of the Colonel Blotto game, which we call the Asymmetric Colonel Blotto game.
  In the Colonel Blotto game, two players simultaneously distribute forces across $n$ battlefields. 
  Within each battlefield, the player that allocates the higher level of force wins. 
  The payoff of the game is the proportion of wins on the individual battlefields. 
  In the asymmetric version, the levels of force distributed to the battlefields must be nondecreasing.
  In this paper, we find a family of Nash equilibria for the case with three battlefields and equal levels of force and prove the uniqueness of the marginal distributions. 
  We also find the unique equilibrium payoff for all possible levels of force in the case with two battlefields, and obtain partial results for the unique equilibrium payoff for asymmetric levels of force in the case with three battlefields.
\end{abstract}


\section{Introduction}
In this section we discuss the background and origins of the Asymmetric Colonel Blotto game.

The Colonel Blotto game, which originates with Borel in~\cite{Borel53}, is a constant-sum game involving two players, $A$ and $B$, and $n$ independent battlefields. 
$A$ distributes a total of $X_A$ units of force among the battlefields, and $B$ distributes a total of $X_B$ units of force among the battlefields, in such a way that each player allocates a nonnegative amount of force to each battlefield.
The player who sends the higher level of force to a particular battlefield wins that battlefield.
The payoff for the whole game is the proportion of the wins on the individual battlefields.

Roberson in \cite{Roberson2006} characterizes the unique equilibrium payoffs for all (symmetric and asymmetric) configurations of the players' aggregate levels of force, and characterizes the complete set of equilibrium univariate marginal distributions for most of these configurations for the Colonel Blotto game.

A possible variant of the Colonel Blotto game, which has not been studied before, is the Asymmetric Colonel Blotto game, where the forces distributed among the battlefields must be in non-decreasing order.

The Asymmetric Colonel Blotto game is a constant-sum game involving two players, $A$ and $B$, and $n$ independent battlefields. $A$ distributes $X_A$ units of force among the battlefields in a nondecreasing manner and $B$ distributes $X_B$ units of force among the battlefields in a non-decreasing manner. Each player distributes forces without knowing the opponent's distribution. The player who provides the higher amount of force to a battlefield wins that battlefield. If both players deploy the same amount of force to a battlefield, we declare that battlefield to be a draw, and the payoff of that battlefield is equally distributed among the two players.\footnote{As we show in \Cref{thm:noAtom}, Nash equilibria of games with equal levels of force do not contain atoms, so the probability that the two players place equal force on some battlefield is 0. Thus we may, if we choose, use a different tie-breaking rule without altering the result in this case.} 
The payoff for each player is the proportion of battlefields won.

In this paper, we study the Nash equilibria and equilibrium payoffs of Asymmetric Colonel Blotto games.

In \Cref{sec:uniqueCDFACB113}, we find a family of equilibria for the game with three battlefields and equal levels of force, and we prove the uniqueness of the marginal distribution functions.
We also prove that in any equilibrium strategies for a game with equal levels of force and at least three battlefields, there are no atoms in the marginal distributions.

In \Cref{sec:w2t}, we find the unique equilibrium payoffs of all cases of the Asymmetric Colonel Blotto game involving only two battlefields, and in \Cref{sec:w3t} we find the unique equilibrium payoffs in the case of three battlefields in certain cases. We conclude with \Cref{sec:open}, where we discuss the difficulties in extending our work to the case of $n\ge 4$ battlefields.

\section{The model}

In this section we introduce the model and related concepts.
The definitions in this section are adaptations from those in \cite{Roberson2006} to the asymmetric version.

\subsection{Players}
Two players, $A$ and $B$, simultaneously allocate their forces $X_A$ and $X_B$ across $n$ battlefields in a nondecreasing manner.
Each player distributes forces without knowing the opponent's distribution. 
The player who provides the higher level of force to a battlefield wins that battlefield, gaining a payoff of $\frac{1}{n}$.
If both players deploy the same level of force to a battlefield, that battlefield is a draw and both players gain a payoff of $\frac{1}{2n}$.
The payoff for each player is the proportion of battlefields won, or equivalently, the sum of the payoffs across all the battlefields.\footnote{That the payoff for each player is the sum of the payoffs across all the battlefields means that two different joint distributions are equivalent if they have the same marginal distributions. Hence, this definition makes it possible to separate a joint distribution into the marginal distributions and a $n$-copula later in this paper.}

Player $i$ sends $x^k_i$ units to the $k$th battlefield.
For player $i$, the set of feasible allocations of force across the $n$ battlefields in the Asymmetric Colonel Blotto game is denoted by $\mathfrak{B}_i$:
$$\mathfrak{B}_i=\left\{\mathbf{x}\in \mathbb{R}^n\ \middle|\ \sum^n_{j=1}x_i^j = X_i,0\leq x^1\leq x^2\leq\cdots\leq x^n\right\}.$$

\begin{definition}
  Given an $n$-variate cumulative distribution function $H$, for every $\mathbf{x,y}\in \mathbb{R}^n$ such that $x_k\leq y_k$ for all $k\in\{1,\dots,n\}$,
  the \emph{$H$-volume} of the $n$-box $[x_1,y_1]\times\cdots\times[x_n,y_n]$ is,
  $$V_H\left([\mathbf{x,y}]\right)=\underset{n}{\Delta}_{x_n}^{y_n}\underset{n-1}{\Delta}_{x_{n-1}}^{y_{n-1}}\dots\underset{2}{\Delta}_{x_2}^{y_2}\underset{1}{\Delta}_{x_1}^{y_1}H(\mathbf{t}),$$
  where
  \[
  \underset{k}{\Delta}_{x_k}^{y_k}H(\mathbf{t})=H(t_1,\ldots,t_{k-1},y_k,t_{k+1},\ldots,t_n) 
  -H(t_1,\ldots,t_{k-1},x_k,t_{k+1},\ldots,t_n).
  \]

  Intuitively, the $H$-volume of a $n$-box just measures the probability that a point within that $n$-box will be chosen given the cumulative distribution function $H$.
\end{definition}

\begin{definition}
  The \emph{support} of an $n$-variate cumulative distribution function $H$ is the complement of the union of all open sets of $\mathbb{R}^n$ with $H$-volume zero.
  Intuitively, the support of a mixed strategy is just the closure of the set of pure strategies that might be chosen.
\end{definition}

\subsection{Strategies}
A mixed strategy, or a \emph{distribution of force}, for player $i$ is an $n$-variate cumulative distribution function (cdf) $P_i:\mathbb{R}^n_+\rightarrow[0,1]$ with support in the set of feasible allocations of force $\mathfrak{B_i}$.
This means that if player $i$ chooses strategy $(X^j)_{j=1}^n$, then the probability that $X^j\leq x^j$ $(j=1,\dots,n)$ is $P_i(x^1,\dots,x^n)$. 
$P_i$ has \emph{marginal} cumulative distribution functions $\left\{F_i^j\right\}_{j=1}^n$, one univariate marginal cumulative distribution function for each battle field $j$. $F_i^j(x^j)$ is the probability that $X^j\leq x^j$. Equivalently, $F_i^j(x)=P_i(X_i,X_i,\dots,x,X_i,\dots,X_i)$, where the $j$th argument is $x$, and the rest of the arguments are $X_i$, the player's entire allocation of force.
We write $P_i=\left(F_i^j\right)_{j=1}^n$.

In the case where the mixed strategy is the combination of finite pure strategies, the mixed strategy $P_i$ where $(i^1_j,i^2_j,\dots,i^n_j)$ units of force are distributed the battlefields $1,2,\dots,n$ respectively with probability $p_j$ is denoted by
$$P_i=\left\{\left(\left(i^1_j,i^2_j,\dots,i^n_j\right),p_j\right)\right\}.$$
Here $i^1_j+i^2_j+\dots+i^n_j=X_i$ and $\sum_j p_j = 1$.

\subsection{The Asymmetric Colonel Blotto Game}
The \emph{Asymmetric Colonel Blotto game} with $n$ battlefields, denoted by 
$$ACB(X_A,X_B,n),$$
is a one-shot game in which players simultaneously and independently announce distributions of force $(x_i^1,\ldots,x_i^n)$ subject to their budget constraints $\sum_{j=1}^n x_1^j=X_A$ and $\sum_{j=1}^n x_2^j=X_B$, $x_i^j\ge 0$ for each $i,j$, and such that $x_i^1\le x_i^2\le\cdots x_i^n$ for $i=1,2$. Each battlefield, providing a payoff of $\frac{1}{n}$, is won by the player that provides the higher allocation of force on that battlefield (and declared a draw if both players allocate the same level of force to a battlefield, each gaining a payoff of $\frac{1}{2n}$), and players' payoffs equal the sum of the payoffs over all the battlefields.

\subsection{Nash equilibrium}
Mixed strategies $P_A$ and $P_B$ form a Nash equilibrium if and only if neither player can increase payoff by changing to a different strategy. 


Since this particular game is two-player and constant-sum, it has the interesting property that the equilibrium payoff is always unique:

\begin{theorem}
  \label{thm:uniqueNEPayoff}
  The Nash equilibrium payoff for both players of any two-player and constant-sum game is unique. 
\end{theorem}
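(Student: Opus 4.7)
The plan is to exploit the standard fact that in a two-player constant-sum game, the two Nash equilibrium conditions can be rewritten as two-sided bounds on a single scalar quantity, namely player $A$'s expected payoff. This collapses the problem of comparing two equilibria into a simple chain of four inequalities.

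First, I would fix notation: let $u_A(Q_A, Q_B)$ denote player $A$'s expected payoff when the players use mixed strategies $Q_A$ and $Q_B$, and let $c$ be the constant with $u_A + u_B \equiv c$. For any Nash equilibrium $(P_A, P_B)$, player $A$'s best-response condition reads
\[
u_A(P_A, P_B) \ge u_A(\tilde{P}_A, P_B) \quad \text{for every feasible } \tilde{P}_A,
\]
while player $B$'s best-response condition, after substituting $u_B = c - u_A$, becomes
\[
u_A(P_A, P_B) \le u_A(P_A, \tilde{P}_B) \quad \text{for every feasible } \tilde{P}_B.
\]
Thus $u_A(P_A, P_B)$ is simultaneously the maximum of $u_A(\cdot, P_B)$ and the minimum of $u_A(P_A, \cdot)$.

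Next, given two Nash equilibria $(P_A, P_B)$ and $(P_A', P_B')$ with values $v := u_A(P_A, P_B)$ and $v' := u_A(P_A', P_B')$, I would apply $A$'s condition at the first equilibrium with deviation $\tilde{P}_A = P_A'$ to obtain $v \ge u_A(P_A', P_B)$, and $B$'s condition at the second equilibrium with deviation $\tilde{P}_B = P_B$ to obtain $v' \le u_A(P_A', P_B)$. Chaining these gives $v' \le v$. Swapping the roles of the two equilibria yields $v \le v'$, so $v = v'$. Uniqueness of $B$'s payoff then follows immediately from $u_B = c - u_A$.

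There is no real obstacle here beyond carefully recording which of the four best-response conditions is being invoked at each step; the argument is essentially bookkeeping once the constant-sum property has been used to unify $A$'s and $B$'s incentive constraints into a single inequality in $u_A$. One caveat worth flagging explicitly is that the theorem asserts only that \emph{whenever} Nash equilibria exist they must share a common payoff; existence itself (requiring a minimax or fixed-point argument in the infinite-strategy setting relevant to this paper) is a separate matter that is not addressed by this proof.
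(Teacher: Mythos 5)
Your proof is correct and takes essentially the same route as the paper's: evaluate $u_A$ at a cross-play profile mixing the two equilibria, invoke one best-response condition from each equilibrium, and chain the inequalities via the constant-sum identity. If anything, your explicit bookkeeping of which player's optimality condition yields which inequality is more careful than the paper's own write-up.
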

\begin{proof}
  Suppose $P_A$ and $P_B$ is a pair of Nash equilibrium strategies. 
  Let $w_i$ be the payoff for player $i$.
  For any pair of Nash equilibrium strategies $P'_A$ and $P'_B$, let $w'_i$ be the payoff for player $i$.

  Let us consider the payoff for both players when player $A$ plays strategy $P_A$ and player $B$ plays strategy $P'_{B}$.
  Call the payoff for player $A$ $v_A$ and the payoff for player $B$ $v_B$. 
  Since $P_A$ is a strategy in a Nash equilibrium, $w_A\ge v_A$. Similarly, $w'_B\ge v_B$. So $w_A+w'_B\ge v_A+v_B$. Since we are considering a constant sum game, $w_A+w'_B\ge v_A+v_B=w_A+w_B$. Hence, $w'_B\ge w_B$. Similarly, we must have $w_B\ge w'_B$. So $w_B=w'_B$. Similarly $w_A=w'_A$.
\end{proof}

\section{Optimal univariate marginal distributions for three battlefields}
\label{sec:uniqueCDFACB113}
In this section we use copulas to separate the joint distributions of players into the marginal distributions and suitable copula. We also find and prove the unique univariate marginal distribution for $ACB(1,1,3)$.

Let us first introduce the concept of copulas:
\begin{definition}
  Let $I$ denote the unit interval $[0,1]$. An \emph{$n$-copula} is a function $C$ from $I^n$ to $I$ such that
  \begin{enumerate}
    \item 
      For all $\mathbf{x} \in I^n$, $C(\mathbf{x}) = 0$ if at least one coordinate of $\mathbf{x}$ is $0$; and if all coordinates of $\mathbf{x}$ are $1$ except $x_k$, then $C (\mathbf{x}) = x_k$.
    \item
      For every $\mathbf{x, y}\in I^n$ such that $x_k \le y_k$ for all $k\in\{1,\dots,n\}$, the $C$-volume of the $n$-box $[x_1, y_1] \times \dots\times [x_n, y_n]$ satisfies \[V_C([\mathbf{x},\mathbf{y}])\ge 0.\]
  \end{enumerate}
\end{definition}

The crucial property of $n$-copulas that we need is the following theorem of Sklar:

\begin{theorem}[Sklar,~\cite{Sklar59}]
  \label{thm:Sklar}
  Let $H$ be an $n$-variate distribution function with univariate marginal distribution functions $F_1, F_2, \dots, F_n$. Then there exists an $n$-copula $C$ such that for all $\mathbf{x} \in \mathbb{R}^n$,
  \begin{equation}
    \label{eq:Sklar}
    H (x_1, \dots , x_n) = C (F_1 (x_1) , \dots, F_n (x_n)).
  \end{equation}
  Conversely, if $C$ is an $n$-copula and $F_1,F_2,\dots ,F_n$ are univariate distribution functions, then the function $H$ defined by \cref{eq:Sklar} is an $n$-variate distribution function with univariate marginal distribution functions $F_1, F_2, \dots , F_n$.
\end{theorem}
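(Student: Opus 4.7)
The plan is to establish Sklar's theorem in two steps, first handling the converse direction, which is mainly a verification, and then the forward direction, which requires a careful construction.

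For the converse, assume $C$ is an $n$-copula and $F_1,\dots,F_n$ are univariate distribution functions, and set $H(x_1,\dots,x_n)=C(F_1(x_1),\dots,F_n(x_n))$. I would verify that $H$ is an $n$-variate distribution function by checking: (i) the boundary conditions ($H\to 0$ whenever some argument tends to $-\infty$ and $H\to 1$ when all arguments tend to $+\infty$), which follow from the boundary axioms of $C$ combined with the monotonicity of each $F_i$; (ii) the rectangle inequality $V_H([\mathbf{x},\mathbf{y}])\ge 0$, which reduces to the analogous inequality for $C$ since applying $F_1,\dots,F_n$ coordinatewise sends an $n$-box in $\mathbb{R}^n$ to a (possibly degenerate) $n$-box in $I^n$; and (iii) the marginal identity $F_i(x_i)=\lim_{x_j\to\infty,\,j\ne i} H(x_1,\dots,x_n)$, which follows by sending the other arguments to $+\infty$ inside $C$ and using the second boundary axiom.

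For the forward direction, I would first handle the case where every $F_i$ is continuous and strictly increasing. Here the quasi-inverses $F_i^{-1}$ are well-defined on $(0,1)$, and one sets $C(u_1,\dots,u_n)=H(F_1^{-1}(u_1),\dots,F_n^{-1}(u_n))$, extended by the boundary conditions on $\partial I^n$. Verifying the copula axioms is direct: the boundary behavior comes from the marginal behavior of $H$, and the rectangle inequality for $C$ is inherited from that of $H$ via the monotone change of variables. For general $F_i$, which may exhibit flat pieces and jumps, I would first define a subcopula $C'$ on $\mathrm{Ran}(F_1)\times\cdots\times\mathrm{Ran}(F_n)\subseteq I^n$ by $C'(F_1(x_1),\dots,F_n(x_n))=H(x_1,\dots,x_n)$. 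Well-definedness follows because if $F_i(x_i)=F_i(x_i')$ for each $i$, then $H$ must take equal values at the two corresponding points, which one deduces from the nonnegativity of $H$-volumes on degenerate boxes.

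The main obstacle is extending the subcopula $C'$ to a full $n$-copula on all of $I^n$. The strategy is to first extend $C'$ by continuity to the closure of its domain (using the fact that $H$ is right-continuous in each coordinate), and then fill in the complement by multilinear interpolation inside each $n$-box whose vertices lie in the closed domain. The delicate step is checking that this interpolation preserves the rectangle inequality; one handles it by reducing to the one-dimensional case along each coordinate direction and observing that linear interpolation between two nondecreasing values remains nondecreasing, then iterating across the $n$ coordinates. Once this extension is verified to satisfy both copula axioms, the identity $H(\mathbf{x})=C(F_1(x_1),\dots,F_n(x_n))$ holds by construction on $\mathrm{Ran}(F_1)\times\cdots\times\mathrm{Ran}(F_n)$ and extends to all of $\mathbb{R}^n$ because both sides are right-continuous in each argument.
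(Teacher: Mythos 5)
The paper does not prove this theorem at all: it is quoted as a classical result of Sklar, and the text immediately after the statement defers the proof to the cited monograph of Schweizer and Sklar. So there is nothing in the paper to compare your argument against line by line; what you have written is an outline of the standard proof found in that reference (and in Nelsen's book), and as an outline it is essentially correct. The converse direction is indeed a direct verification, and your reduction of the rectangle inequality for $H$ to that for $C$ via the coordinatewise image of an $n$-box is the right mechanism. For the forward direction, the subcopula construction on $\mathrm{Ran}(F_1)\times\cdots\times\mathrm{Ran}(F_n)$ and the well-definedness argument are correct; note that the cleanest justification is the Lipschitz bound $\lvert H(\mathbf{x})-H(\mathbf{y})\rvert\le\sum_i\lvert F_i(x_i)-F_i(y_i)\rvert$, which follows from monotonicity in each coordinate together with the fact that each marginal dominates the increments of $H$. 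That same Lipschitz bound, not right-continuity of $H$, is what lets you extend the subcopula continuously to the closure of its domain, so your justification of that step is slightly misattributed even though the step itself is sound. The one genuinely delicate point, which you acknowledge but only sketch, is verifying that the multilinear interpolation preserves nonnegativity of $n$-box volumes; the standard argument expresses the $C$-volume of an arbitrary box as a nonnegative combination of $C'$-volumes of boxes with vertices in the closed domain, and your coordinate-by-coordinate reduction is a workable route to that but would need to be carried out carefully to constitute a complete proof. In short: your proposal is a faithful sketch of the classical proof that the paper chooses to cite rather than reproduce.
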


The proof of this theorem can be found in \cite{SchweizerSklar}.

This theorem establishes the equivalence between a joint distribution on the one hand, and a combination of a complete set of marginal distributions and a $n$-copula on the other hand.
We will now show that the univariate marginal distribution functions and the $n$-copula are separate components of the players' best responses.

\begin{proposition}
  \label{prop:Lagrange}
  In the game $ACB(X_A, X_B , n)$, suppose that the opponent's strategy is fixed as the distribution $P_{-i}$, and that $X_A = X_B$. Then, in order for player $i$ to maximize payoff under the constraint that the support of the chosen strategy must be in $\mathfrak{B}_i$, player $i$ must solve an optimization problem. Given that there are no atoms in Nash equilibrium strategies (\Cref{thm:noAtom}), we can write the Lagrangian for this optimization problem as
  \begin{equation}
    \label{eq:Lagrange}
    \max\limits_{\left\{F_{i}^ j\right\}_{j=1}^n}\lambda_i\sum_{j=1}^n\left[\int_0^\infty \left[ \frac{1}{n\lambda_i}F^j_{-i}(x)-x \right]dF^j_i \right]+\lambda_iX_i,
  \end{equation}
  where the set of univariate marginal distribution functions $\left\{F_{i}^ j\right\}_{j=1}^n$ satisfy the constraint that there exists an $n$-copula $C$ such that the support of the $n$-variate distribution $$C\left( F^1_i\left(x^1\right),\dots,F^n_i\left(x^n\right) \right)$$ is contained in $\mathfrak{B}_i$.\footnote{Here we only maximize over the set of $\left\{F_{i}^ j\right\}_{j=1}^n$ that satisfy the constraint, not all of them.}
\end{proposition}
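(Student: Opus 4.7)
The plan is to turn the definition of player $i$'s expected payoff into an explicit integral against the opponent's marginals, fold in the budget as a single scalar constraint, and then rearrange. First, I would compute the expected payoff of $i$ against the fixed strategy $P_{-i}$. On battlefield $j$, player $i$ scores $\frac{1}{n}$ when $X^j_{-i}<X^j_i$ and $\frac{1}{2n}$ when $X^j_{-i}=X^j_i$; by \Cref{thm:noAtom}, $F^j_{-i}$ has no atoms, so the tie event has probability zero. Conditioning on $X^j_i$ and using the independence of the two players' draws gives an expected payoff of
\[
\frac{1}{n}\sum_{j=1}^n\int_0^\infty F^j_{-i}(x)\,dF^j_i(x),
\]
which depends only on the univariate marginals $\{F^j_i\}_{j=1}^n$, not on the copula glueing them together.

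Next, I would encode the budget at the level of the marginals. For any joint law with support in $\mathfrak{B}_i$ we have $\sum_{j=1}^n X^j_i=X_i$ pointwise; taking expectations yields the single scalar equation $\sum_{j=1}^n\int_0^\infty x\,dF^j_i(x)=X_i$. Attaching a Lagrange multiplier $\lambda_i$ to this constraint, the standard Lagrangian reads
\[
\frac{1}{n}\sum_{j=1}^n\int_0^\infty F^j_{-i}(x)\,dF^j_i(x)-\lambda_i\left[\sum_{j=1}^n\int_0^\infty x\,dF^j_i(x)-X_i\right].
\]
Combining the two integrals into one, factoring out $\lambda_i$, and separating off the constant $\lambda_i X_i$ reproduces \eqref{eq:Lagrange} exactly.

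It remains to handle what a scalar Lagrange multiplier cannot capture: the chosen marginals must actually arise from some joint law whose pure-strategy support respects the monotonicity $x^1\le\cdots\le x^n$ defining $\mathfrak{B}_i$. This is inherently a constraint on the joint distribution, not on each $F^j_i$ separately. By Sklar's theorem (\Cref{thm:Sklar}) any admissible joint with the given marginals takes the form $C(F^1_i(x^1),\ldots,F^n_i(x^n))$ for some $n$-copula $C$, so I would keep the existence of such a $C$ with support in $\mathfrak{B}_i$ as an admissibility side condition restricting the domain over which the Lagrangian is maximized, exactly as the proposition's footnote indicates.

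The main obstacle is not the Lagrangian bookkeeping, which reduces to linearity of the integral together with a single algebraic rearrangement, but recognizing that both the objective and the budget are insensitive to the copula. This copula-insensitivity is precisely what lets the problem split cleanly into a marginals-only Lagrangian plus a copula-existence feasibility constraint, and it is what justifies analyzing the marginal distributions in isolation in the rest of \Cref{sec:uniqueCDFACB113}.
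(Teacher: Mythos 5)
Your proposal is correct and follows essentially the same route as the paper's proof: express the expected payoff as $\sum_{j=1}^n\int_0^\infty \frac{1}{n}F_{-i}^j(x)\,dF_i^j$, reduce the budget to the single scalar constraint $\sum_{j=1}^n\int_0^\infty x\,dF_i^j=X_i$, attach the multiplier $\lambda_i$ and rearrange, and invoke Sklar's theorem to recast the support constraint as the existence of a suitable $n$-copula. Your added remarks on the tie event having probability zero and on the copula-insensitivity of the objective are slightly more explicit than the paper's write-up but do not change the argument.
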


\begin{proof}
  The payoff for player $i$ given the opponent's marginal distribution functions $\left\{F_{-i}^ j\right\}_{j=1}^n$ is the sum of the payoffs across all the battlefields:

  \[
  \sum_{j=1}^n \int_0^\infty\frac{1}{n}F_{-i}^j(x)\, dF_i^j.
  \] 

  Here, the integral is the Riemann-Stieltjes integral, so the integrand is $0$ for $x>X_i$.
  We also use the Riemann-Stieltjes integral for other integrals later in the paper.

  \[
  \max\limits_{P_i} \sum_{j=1}^n \int_0^\infty\frac{1}{n}F_{-i}^j(x)\,dF_i^j.
  \]
  That $P_i$ is contained in $\mathfrak{B}_i$ implies that the sum of the levels of force across all battlefields is $X_i$:
  \[
  \sum_{j=1}^n \int_0^\infty x\ dF^j_i = X_i.
  \]
  Hence, the Lagrangian is 
  \begin{multline*}
    \max \limits_{P_i} \left[ \sum_{j=1}^n \int_0^\infty\frac{1}{n}F_{-i}^j(x)dF_i^j -\lambda_i\left[ \sum_{j=1}^n \int_0^\infty x\ dF^j_i - X_i \right] \right] \\
    = \max\limits_{\left\{F_{i}^ j\right\}_{j=1}^n}\lambda_i\sum_{j=1}^n\left[\int_0^\infty \left[ \frac{1}{n\lambda_i}F^j_{-i}(x)-x \right]dF^j_i \right]+\lambda_iX_i .
  \end{multline*}
  Finally, from \Cref{thm:Sklar} the $n$-variate distribution function $P_i$ is equivalent to the set of univariate marginal distribution functions $\left\{F_{i}^ j\right\}_{j=1}^n$ combined with an appropriate $n$-copula, $C$, so the result follows directly.
\end{proof}

\begin{restatable}{theorem}{uniqueCDFACB}
  \label{thm:uniqueCDFACB113}
  The unique Nash equilibrium univariate marginal distribution functions of the game $ACB(1, 1 , 3)$ are for each player to allocate forces according to the following univariate distribution functions:  \begin{align*}
    F^1(u)          & = \left\{                        
    \begin{array}{cc}
      3u              & 0\leq u\leq\frac{1}{3}           \\
      1               & \frac{1}{3}<u\leq 1              
    \end{array}
    \right.
    \\
    F^2(u)          & = \left\{                        
    \begin{array}{cc}
      0               & 0\leq u<\frac{1}{6}              \\	
      -\frac{1}{2}+3u & \frac{1}{6}\leq u\leq\frac{1}{2} \\
      1               & \frac{1}{2}<u\leq 1              
    \end{array}
    \right.
    \\
    F^3(u)          & = \left\{                        
    \begin{array}{cc}
      0               & 0\leq u<\frac{1}{3}              \\	
      -1+3u           & \frac{1}{3}\leq u\leq\frac{2}{3} \\
      1               & \frac{2}{3}<u\leq 1              
    \end{array}
    \right.
    \\
  \end{align*}
  The expected payoff for both players is $\frac{1}{2}$.

  This means that any equilibrium strategies must have the marginal distributions described above, and that any joint distribution with support in $\mathfrak{B}_i$ with such marginal distributions is an equilibrium strategy.
\end{restatable}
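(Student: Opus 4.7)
The plan is to exploit the Lagrangian characterization from Proposition~\ref{prop:Lagrange} together with the absence of atoms (\Cref{thm:noAtom}). Since the game is symmetric in player labels when $X_A=X_B$, the uniqueness of the equilibrium payoff (\Cref{thm:uniqueNEPayoff}) and a standard averaging argument let me restrict attention to a symmetric equilibrium, so set $F^j:=F^j_A=F^j_B$. In the regions where the ordering constraint $F^1\ge F^2\ge F^3$ does not bind, a local re-distribution of mass within a single marginal is feasible, so the Lagrangian forces the integrand $\frac{1}{3\lambda}F^j(x)-x$ to be constant on the support of $F^j$. By no-atoms, $F^j$ is continuous, and differentiating the constancy condition yields that $F^j$ is uniform with density $3\lambda$ on each connected piece of its support; since $F^j$ rises from $0$ to $1$, that support must be a single interval $[a_j,b_j]$ of length $1/(3\lambda)$.

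Next I translate the feasibility requirement. The condition $x^1\le x^2\le x^3$ a.s.\ is equivalent to the pointwise inequalities $F^1\ge F^2\ge F^3$, which together with the uniform-on-interval structure forces $a_1\le a_2\le a_3$ and $b_1\le b_2\le b_3$. Wherever two adjacent $F^j$'s are simultaneously in their linear phase, the difference $F^j-F^{j+1}$ is constant; matching values at the endpoints of each overlap produces algebraic relations among the $a_j$ and $b_j$. Combined with the budget identity $\sum_j(a_j+b_j)=2$ (coming from $\sum_j\int x\,dF^j=1$ using uniformity) and the boundary forcing $a_1=0$ (because for $x\in[0,a_1)$ the off-support Lagrangian inequality reduces to $-x\le -a_1$, which fails unless $a_1=0$, even after allowing any admissible dual measure supported where $F^1=F^2=0$), the resulting linear system has the unique solution $\lambda=1$, $a_j=(j-1)/6$, $b_j=(j+1)/6$, which recovers the stated marginals.

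The main obstacle is correctly incorporating the nondecreasing ordering constraint into the first-order conditions: the naive Lagrangian from \Cref{prop:Lagrange} would push the mass of $F^2$ and $F^3$ toward small $x$ (where the integrand $\frac{1}{3\lambda}F^j-x$ is largest), and only the coupling-style constraint $F^1\ge F^2\ge F^3$ prevents this. A full KKT analysis with nonnegative dual measures $\mu_1,\mu_2$ enforcing $F^1\ge F^2$ and $F^2\ge F^3$ (supported, by complementary slackness, on the set where the respective marginals coincide) is needed both to justify the constancy condition above and to rule out alternative uniform-on-interval configurations such as $a_1=a_2=0,\ a_3=1/2$, for which an easy pure-strategy deviation like $(1/3,1/3,1/3)$ strictly improves the payoff.

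Once the marginals are pinned down, existence of an equilibrium follows by exhibiting at least one joint distribution in $\mathfrak{B}_i$ with these marginals, which can be built as a suitable mixture of nondecreasing pure allocations summing to $1$; then any such joint is automatically a best response, since the payoff only depends on the marginals. Finally, the expected payoff $\frac{1}{2}$ for both players is immediate from symmetry together with the constant-sum property of the game.
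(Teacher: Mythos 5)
There is a genuine gap at the step where you claim the interval locations are determined. Your first-order analysis correctly yields that each $F^j$ is uniform with density $3\lambda$ on an interval $[a_j,b_j]$ of length $1/(3\lambda)$, that $a_1=0$, and that $\sum_j(a_j+b_j)=2$; but these facts alone leave a continuum of candidates (e.g.\ $\lambda=1$, $a_2=\tfrac{1}{12}$, $a_3=\tfrac{5}{12}$ satisfies all of them, yet the deviation $(\tfrac13,\tfrac13,\tfrac13)$ earns $\tfrac{7}{12}>\tfrac12$ against it). The phrase ``matching values at the endpoints of each overlap produces algebraic relations'' does not produce any equations: $F^1\ge F^2\ge F^3$ with equal slopes only gives the inequalities $a_1\le a_2\le a_3$. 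What actually pins down $a_2$ and $a_3$ is the requirement that \emph{no jointly feasible} deviation $(y_1,y_2,y_3)$ with $y_1\le y_2\le y_3$, $\sum y_j=1$ improves the payoff --- a condition that couples the three battlefields and is strictly stronger than per-battlefield constancy of the Lagrangian integrand. You correctly flag this (``a full KKT analysis with nonnegative dual measures\ldots is needed'') but never carry it out, so the uniqueness claim is not established. Two smaller problems: the asserted equivalence between $x^1\le x^2\le x^3$ a.s.\ and $F^1\ge F^2\ge F^3$ is only an implication in the direction you need for necessity (the converse requires constructing a coupling that also respects the sum constraint, which is a nontrivial existence statement); and the claim that each support is a single interval needs an argument that the constant value of $\frac{1}{3\lambda}F^j(x)-x$ cannot jump across a gap in the support, which again runs into the coupling issue.

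The paper's proof avoids this entirely by guess-and-verify plus the interchangeability of equilibria in two-player constant-sum games. It first proves by direct computation (\Cref{lem:payoffPureAgainstP}) that every feasible pure strategy earns exactly $\tfrac12$ against the candidate $P$ on the box $0\le a\le\tfrac13$, $\tfrac16\le b\le\tfrac12$, $\tfrac13\le c\le\tfrac23$ and strictly less outside it, so $P$ is an equilibrium (\Cref{lem:isNE}). Then, since every pure strategy in the support of \emph{any} equilibrium $P'$ must be a best response to $P$, the support of each marginal $f^j$ is confined to $[\underline{s}^j,\bar{s}^j]$ (\Cref{lem:supportP}); only after the supports are known does the all-pay-auction first-order condition (\Cref{lem:constantB,lem:valueB,lem:valueLambda}) force $f^j=F^j$. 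If you want to salvage your approach, you would need to either carry out the dual-measure KKT analysis in full, or insert the paper's verification-plus-interchangeability step to fix the supports before invoking the constancy condition.
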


Intuitively, it is easy to see why this particular set of marginal distributions might guarantee a Nash equilibrium. Since the distribution density is the same among the three battlefields, the payoff of a pure strategy $p = (a,b,c)$ remains constant at $\frac{1}{2}$ when it changes inside the region $0\le a\le\frac{1}{3}$, $\frac{1}{6}\le b\le\frac{1}{2}$, and $\frac{1}{3}\le c\le \frac{2}{3}$. A player can only hope to increase payoff above that given by $p$ by moving below the lower bound of the marginal distribution in some battlefield and staying inside the bounds of the marginal distributions in the other battlefields. However, this is impossible: $a$ cannot be negative; any attempt to bring $b$ below $\frac{1}{6}$ would result in $c$ being above the upper bound $\frac{2}{3}$; $c$, as the biggest of the 3, cannot be below $\frac{1}{3}$.
(The rigorous proof of this can be found in \Cref{lem:payoffPureAgainstP}.)


Before we give the formal proof of this theorem, let us first examine some joint distributions that satisfy the conditions in \Cref{thm:uniqueCDFACB113}.

Consider the $3$-variate distribution function $P_1$ that uniformly places mass $\frac{1}{3}$ on each of the three sides of the equilateral triangle with vertices $\left(\frac{1}{3},\frac{1}{3},\frac{1}{3}\right)$, $\left(0,\frac{1}{2},\frac{1}{2}\right)$, and $\left(\frac{1}{6},\frac{1}{6},\frac{2}{3}\right)$ to $\left(\frac{1}{3},\frac{1}{3},\frac{1}{3}\right)$ (Depicted in \Cref{fig:jointNE01}). Clearly its marginal distributions are those described in \Cref{thm:uniqueCDFACB113}.

Similarly, as in \Cref{fig:jointNE02}, divide the original equilateral triangle into three smaller equilateral triangles with side lengths $\frac{1}{3}$ of the original, and let $P_2$ be the strategy that uniformly distribute on the sides of the smaller triangles. Clearly $P_2$ has the same marginal distributions as $P_1$, and is thus a joint distribution as described in \Cref{thm:uniqueCDFACB113}. 
As shown in \Cref{fig:jointNE03}, we can continue this process on the smaller triangles (or only on some of the smaller triangles), and thus we obtain an countably-infinite family of joint distributions with marginal distributions as described in \Cref{thm:uniqueCDFACB113}.
Furthermore, given any two such suitable joint distributions, their weighted average is also a suitable joint distribution, and thus we obtain a continuum of suitable joint distributions.
\begin{figure}[htb]
  \center
  \begin{subfigure}[t]{0.4\linewidth}
    \center
    \begin{tikzpicture}[scale = 3]
      \coordinate (A) at (0,0);
      \coordinate (B) at (0.5,{sqrt(3)/2});
      \coordinate (C) at (1,0);
      \draw (A) node[anchor=north]{$(0,0,1)$}
      -- (C) node[anchor=north]{$(0,1,0)$}
      -- (B) node[anchor=south]{$(1,0,0)$}
      -- cycle;
      \coordinate (O) at (0.5, {1/2/sqrt(3)});
      \filldraw[fill = gray] (0.5,0) node[anchor = north]{$x_2\le x_3$}
      -- (O) 
      -- (A)
      -- cycle;
      \draw[dash dot] (O) -- (B);
      \draw[dash dot] (O) 
      -- (0.75, {sqrt(3)/4}) node[anchor = south west]{$x_1\le x_2$};
    \end{tikzpicture}
    \caption{The support, $\mathfrak{B}_i$, is the shaded triangle, which is the triangle shown in \Cref{fig:jointNE01,fig:jointNE02,fig:jointNE03}.}
  \end{subfigure}
  ~
  \begin{subfigure}[t]{0.4\linewidth}
    \center
    \begin{tikzpicture}[scale = 8]
      \coordinate (A) at (0,0);
      \coordinate (P1) at (0.5, {1/2/sqrt(3)});
      \coordinate (P2) at (0.5,0);
      \coordinate (P3) at (0.25, {1/4/sqrt(3)});

      \draw (0.5,0) 
      -- (P1) 
      -- (A) node[anchor = north]{$\left( 0, 0, 1 \right)$}
      -- cycle;
      \draw[ultra thick, color = blue] (P1) node[anchor = south]{$\left( \frac{1}{3},\frac{1}{3}, \frac{1}{3} \right)$}
      -- (P2) node[anchor = north]{$\left(0,\frac{1}{2},\frac{1}{2}\right)$}
      -- (P3) node[anchor = south east]{$\left( \frac{1}{6}, \frac{1}{6}, \frac{2}{3} \right)$}
      -- cycle;
    \end{tikzpicture}
    \caption{The strategy that distributes uniformly on the blue lines is an equilibrium strategy.}
    \label{fig:jointNE01}
  \end{subfigure}

  \begin{subfigure}[t]{0.4\linewidth}
    \center
    \begin{tikzpicture}[scale = 10]
      \coordinate (A) at (0,0);
      \coordinate (P1) at (0.5, {1/2/sqrt(3)});
      \coordinate (P2) at (0.5,0);
      \coordinate (P3) at (0.25, {1/4/sqrt(3)});

      \draw (0.5,0) 
      -- (P1) 
      -- (A) node[anchor = north]{$\left( 0, 0, 1 \right)$}
      -- cycle;
      \draw[ultra thick, color = red] (P1) node[anchor = south]{$\left( \frac{1}{3},\frac{1}{3}, \frac{1}{3} \right)$}
      -- ($1/3*(P2)+2/3*(P1)$) node[anchor = west]{\tiny $\left(\frac{2}{9},\frac{7}{18},\frac{7}{18}\right)$}
      -- ($1/3*(P3)+2/3*(P1)$) node[anchor = south east]{\tiny $\left( \frac{5}{18}, \frac{5}{18}, \frac{4}{9} \right)$}
      -- cycle;
      \draw[ultra thick, color = red] (P2) node[anchor = north]{$\left( 0,\frac{1}{2}, \frac{1}{2} \right)$}
      -- ($1/3*(P1)+2/3*(P2)$) node[anchor = west]{\tiny $\left(\frac{1}{9},\frac{4}{9},\frac{4}{9}\right)$}
      -- ($1/3*(P3)+2/3*(P2)$) node[anchor = north east]{\tiny $\scriptstyle \left( \frac{1}{18}, \frac{7}{18}, \frac{5}{9} \right)$}
      -- cycle;
      \draw[ultra thick, color = red] (P3) node[anchor = east]{$\left( \frac{1}{6}, \frac{1}{6}, \frac{2}{3} \right)$}
      -- ($1/3*(P1)+2/3*(P3)$) node[anchor = south]{\tiny $\left(\frac{2}{9},\frac{2}{9},\frac{5}{9}\right)$}
      -- ($1/3*(P2)+2/3*(P3)$) node[anchor = north]{\tiny $\left( \frac{1}{9}, \frac{5}{18}, \frac{11}{18} \right)$}
      -- cycle;
    \end{tikzpicture}
    \caption{The strategy that distributes uniformly on the red lines is an equilibrium strategy.}
    \label{fig:jointNE02}
  \end{subfigure}
  ~
  \begin{subfigure}[t]{0.4\linewidth}
    \center
    \begin{tikzpicture}[scale = 10]
      \coordinate (A) at (0,0);
      \coordinate (P1) at (0.5, {1/2/sqrt(3)});
      \coordinate (P2) at (0.5,0);
      \coordinate (P3) at (0.25, {1/4/sqrt(3)});

      \draw (0.5,0) node[anchor = north]{$\left( 0,\frac{1}{2}, \frac{1}{2} \right)$}
      -- (P1) node[anchor = south]{$\left( \frac{1}{3},\frac{1}{3}, \frac{1}{3} \right)$}
      -- (A) node[anchor = north]{$\left( 0, 0, 1 \right)$}
      -- cycle;
      \foreach \x/\y/\z in {1/2/3,2/3/1,3/1/2,1/3/2,2/1/3,3/2/1}{
      \draw[ultra thick, color = green] {(P\x)} 
      -- ($1/9*(P\y)+8/9*(P\x)$)  
      -- ($1/9*(P\z)+8/9*(P\x)$)
      -- cycle;

      \draw[ultra thick, color = green] ($2/9*(P\y)+7/9*(P\x)$) 
      -- ($3/9*(P\y)+6/9*(P\x)$)  
      -- ($1/9*(P\z)+6/9*(P\x)+2/9*(P\y)$)
      -- cycle;
      }

    \end{tikzpicture}
    \caption{The strategy that distributes uniformly on the green lines is an equilibrium strategy.}
    \label{fig:jointNE03}
  \end{subfigure}
  \caption{Equilibrium strategies.}
  \label{fig:support}
\end{figure}
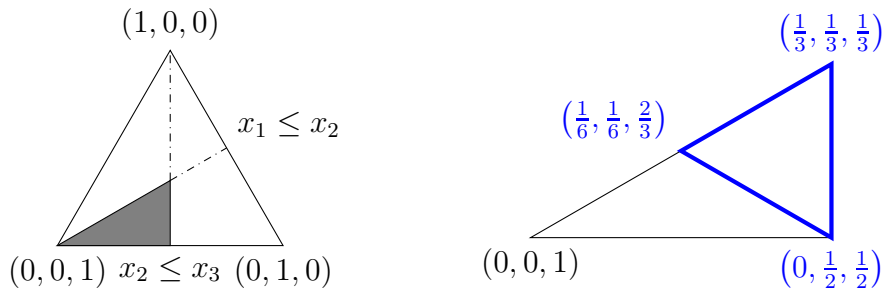
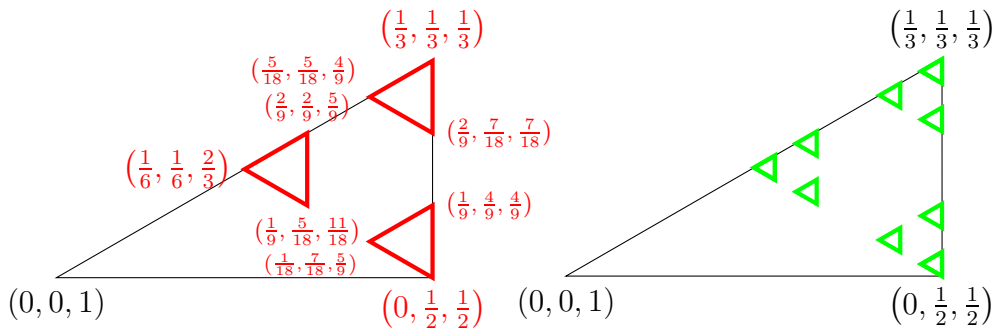

Given these joint distributions that have marginal distributions as characterized by \Cref{thm:uniqueCDFACB113}, we have the following theorem:

\begin{theorem}
  \label{thm:existCopula}
  For the unique set of equilibrium univariate marginal distribution functions $\left\{F_{i}^ j\right\}_{j=1}^3$ characterized in \Cref{thm:uniqueCDFACB113}, there exists a $3$-copula $C$ 
  such that the support of the $3$-variate distribution function 
  $$C\left( F_i^1\left( x^1 \right), F_i^2\left( x^2 \right), F_i^3\left( x^3 \right) \right)$$
  is contained in $\mathfrak{B}_i$.
\end{theorem}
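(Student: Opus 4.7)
The plan is to exhibit an explicit joint distribution supported in $\mathfrak{B}_i$ whose univariate marginals are exactly $F_i^1, F_i^2, F_i^3$, and then invoke Sklar's theorem to extract the desired copula. The natural candidate is the distribution $P_1$ already introduced in the discussion preceding the theorem, namely the one that places mass $1/3$ uniformly on each of the three line segments forming the boundary of the triangle $T$ with vertices $(1/3,1/3,1/3)$, $(0,1/2,1/2)$, and $(1/6,1/6,2/3)$.

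First I would verify that the support of $P_1$ lies in $\mathfrak{B}_i$. Each of the three vertices satisfies $0\le x^1\le x^2\le x^3$ and $x^1+x^2+x^3 = 1$, so each vertex belongs to $\mathfrak{B}_i$. Since $\mathfrak{B}_i$ is cut out by linear (in)equalities, it is convex, so the entire triangle $T$, and in particular its three sides, lie in $\mathfrak{B}_i$. Next I would check that the marginals of $P_1$ are $F_i^1, F_i^2, F_i^3$. For each side, parameterize the segment linearly by $t\in[0,1]$ and push forward the uniform measure of mass $1/3$. The three contributions to the density of $x^1$ are uniform densities on $[0,1/3]$, $[0,1/6]$, and $[1/6,1/3]$ of heights $1$, $2$, and $2$ respectively, which sum to the constant $3$ on $[0,1/3]$, matching the derivative of $F_i^1$. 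Analogous bookkeeping on the other two coordinates recovers $F_i^2$ and $F_i^3$.

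Finally I would invoke \Cref{thm:Sklar} on $P_1$: there exists a $3$-copula $C$ such that $P_1(x^1,x^2,x^3) = C\!\left(F_i^1(x^1), F_i^2(x^2), F_i^3(x^3)\right)$ for all $(x^1,x^2,x^3)\in\mathbb{R}^3$. The support of the right-hand side equals the support of $P_1$, which is contained in $\mathfrak{B}_i$ by the previous step, and the theorem follows. There is no real obstacle here: once $P_1$ is specified, the verification of marginals is a routine bookkeeping exercise and Sklar's theorem delivers the copula immediately. The substantive content of the statement is not that $C$ itself is difficult to construct, but rather that the marginals pinned down in \Cref{thm:uniqueCDFACB113} are actually realizable by some joint distribution whose support respects the ordering constraint defining $\mathfrak{B}_i$.
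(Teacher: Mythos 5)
Your proposal is correct and follows essentially the same route as the paper: both exhibit the distribution $P_1$ supported uniformly on the sides of the triangle with vertices $\left(\frac{1}{3},\frac{1}{3},\frac{1}{3}\right)$, $\left(0,\frac{1}{2},\frac{1}{2}\right)$, $\left(\frac{1}{6},\frac{1}{6},\frac{2}{3}\right)$ and then apply Sklar's theorem. The only difference is that you carry out the marginal-density bookkeeping explicitly (and correctly) where the paper simply asserts it as clear.
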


\begin{proof}
  Consider the $3$-variate distribution function $P_1$ that uniformly places mass $\frac{1}{3}$ on each of the three sides of the equilateral triangle with vertices $\left(\frac{1}{3},\frac{1}{3},\frac{1}{3}\right)$, $\left(0,\frac{1}{2},\frac{1}{2}\right)$, and $\left(\frac{1}{6},\frac{1}{6},\frac{2}{3}\right)$ to $\left(\frac{1}{3},\frac{1}{3},\frac{1}{3}\right)$ (depicted in \Cref{fig:jointNE01}). Clearly its marginal distributions are those described in \Cref{thm:uniqueCDFACB113}, and its support is in $\mathfrak{B}_i$.
  Hence, according to Sklar's theorem (\Cref{thm:Sklar}), for the unique set of equilibrium univariate marginal distribution functions $\left\{F_{i}^ j\right\}_{j=1}^3$ characterized in \Cref{thm:uniqueCDFACB113}, there exists a $3$-copula $C$ 
  such that the support of the $3$-variate distribution function $C\left( F_i^1\left( x^1 \right), F_i^2\left( x^2 \right), F_i^3\left( x^3 \right) \right)$ is contained in $\mathfrak{B}_i$.

\end{proof}

Before we provide the formal proof of \Cref{thm:uniqueCDFACB113}, we first seek to provide some intuition for the outline of the proof, which takes inspiration from the proofs in \cite{Roberson2006} and \cite{BayeEtAl}. 

From \cref{eq:Lagrange} in \Cref{prop:Lagrange}, we know that in an Asymmetric Colonel Blotto game $ACB(1,1,3)$, each player's Lagrangian can be written as
\[
\max\limits_{\left\{F_{i}^j\right\}_{j=1}^3}\lambda_i\sum_{j=1}^3\left[\int_0^\infty \left[ \frac{1}{3\lambda_i}F^j_{-i}(x)-x \right]dF^j_i \right]+\lambda_iX_i,
\]
subject to the constraint that there exists an $n$-copula, $C$, such that the support of the $n$-variate distribution $C\left( F^1_i\left(x^1\right),\dots,F^n_i\left(x^n\right) \right)$ is contained in $\mathfrak{B}_i$. 
If there exists a suitable $3$-copula, then, for different $j$, $F_i^j$ is independent. So \cref{eq:Lagrange113} is the maximization of three independent sums, hence the sum of three independent maximizations:
\begin{multline*}
  \max\limits_{\left\{F_{i}^ j\right\}_{j=1}^3}\lambda_i\sum_{j=1}^3\left[\int_0^\infty \left[ \frac{1}{3\lambda_i}F^j_{-i}(x)-x \right]\, dF^j_i \right]+\lambda_iX_i \\
  = \sum_{j=1}^3 \max\limits_{F_{i}^ j}\lambda_i\int_0^\infty \left[ \frac{1}{3\lambda_i}F^j_{-i}(x)-x \right]\,dF^j_i +\lambda_iX_i.
\end{multline*}
Hence we have reduced the maximization problem over a joint distribution to separate maximization problems over univariate distributions, which can be easily solved.

Note that each separate maximization problem has the same form as that of an all-pay auction. 
An all-pay auction is an auction where several players simultaneously call out a bid for a prize, and all bidders pay regardless of who wins the prize; the prize is awarded to the highest bidder.
In an all-pay auction with two bidders, let $F_i$ represent bidder $i$'s distribution of the bid, and $v_i$ represent the value of the auction for bidder $i$. 
Each bidder $i$'s problem is

\[
\max\limits_{ F_i } \int^\infty _0 \left[ v_i F_{-i}(x) -x \right]dF_i.
\]

In the separate maximization problems for the Asymmetric Colonel Blotto game, the quantity $\frac{1}{3\lambda_i}$ acts as the value $v_i$ for the all-pay auctions. \Cref{lem:valueLambda} establishes the uniqueness of the Lagrange multipliers, hence the uniqueness of the value $v_i$.

A potential issue that arises is whether the constraint that the strategy $P_i$ must be in $\mathfrak{B}_i$ leads to equilibria outside those characterized by \Cref{thm:uniqueCDFACB113}.
From Sklar's Theorem (\Cref{thm:Sklar}), we know that the joint distribution $P_i$ is equivalent to a set of marginal distributions $\left\{F_{i}^ j\right\}_{j=1}^3$, together with a suitable $3$-copula $C$. So if a suitable $3$-copula exists, the constraint that $P_i$ be in $\mathfrak{B}_i$ places no restraint on the set of potential univariate marginal distribution functions, $\left\{F_{i}^ j\right\}_{j=1}^3$; instead, this constraint and the set of univariate marginal distributions places a restraint on the set of feasible $3$-copulas. Since \Cref{thm:existCopula} establishes the existence of suitable $3$-copula, this is not an issue.

On the other hand, the restriction on the $3$-copula implies that the set of equilibrium $3$-variate distributions for the game forms a strict subset of the set of all $3$-variate distribution functions with univariate marginal distribution functions characterized by \Cref{thm:uniqueCDFACB113}.

The proof of \Cref{thm:uniqueCDFACB113} under the assumption that suitable $3$-copula exists is contained in the results that fill up the rest of this section. The proof takes inspiration from the proofs found in \cite{Roberson2006} and \cite{BayeEtAl}.

First, for the form of the Lagrangian in \Cref{prop:Lagrange} to be accurate, we need show that there are no atoms in any Nash equilibrium strategies. The following theorem proves this in the more general case of equal levels of force for both players and any $n$ number of battlefields where $n\ge3$:
\begin{theorem}
  \label{thm:noAtom}
  If $n\ge 3$, then Nash equilibrium strategies for $ACB(1,1,n)$ cannot contain atoms. 
\end{theorem}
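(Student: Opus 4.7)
The plan is to argue by contradiction: if some univariate marginal distribution of an equilibrium strategy contains an atom, then one of the players has a strictly profitable deviation. Without loss of generality, suppose that $F_A^j$ has an atom of mass $\alpha>0$ at some point $a\in[0,1]$, and let $j^\star$ be the largest index for which $F_A^{j^\star}$ has an atom at $a$. The monotonicity constraint then guarantees that a set of $P_A$-probability $\alpha$ satisfies $x^{j^\star}=a$, with $x^{j^\star+1}>a$ when $j^\star<n$ and $x^{j^\star-1}\leq a$ when $j^\star>1$.

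The first step is to show that in equilibrium there is a gap below $a$ in $F_B^{j^\star}$: there exists $\delta_0>0$ such that $F_B^{j^\star}$ places no mass on $(a-\delta_0,a]$. If this fails, then for every $\delta>0$ player $B$ has positive probability of playing a pure strategy whose $j^\star$-coordinate $b$ lies in $(a-\delta,a]$. Shifting $b$ up to a value slightly above $a$ captures the atom of $F_A^{j^\star}$ and contributes at least of order $\alpha/n$ to $B$'s payoff on battlefield $j^\star$. This shift increases $B$'s total force by only $O(\delta+\eta)$ (where $\eta$ is how far above $a$ the shift goes), which must be compensated by a small decrease on some other battlefield. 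Since $n\geq 3$, one can find an auxiliary battlefield and a choice of support point with enough slack to absorb the change while preserving $x^1\leq\cdots\leq x^n$; for $\delta,\eta$ small enough the $O(\delta+\eta)$ loss there is dominated by the $\Theta(\alpha)$ gain on battlefield $j^\star$, contradicting the best-response property of $P_B$.

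Given this gap, the second step is to build a profitable deviation for $A$ by shifting the atom on battlefield $j^\star$ from $a$ down to $a-\delta_0/2$. Because $F_B^{j^\star}$ is flat on $(a-\delta_0,a]$, the winning probability on battlefield $j^\star$ is unchanged, but each realization in the shifted atom frees $\delta_0/2$ of budget. Using $n\geq 3$ once more, this freed budget can be reinvested on a different battlefield (for which the opponent's marginal has positive mass nearby, since otherwise the unused budget could already have been redirected) to strictly increase $A$'s winning probability there, contradicting the best-response property of $P_A$.

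The main obstacle is the nondecreasing constraint: unlike in classical Colonel Blotto, we cannot freely redistribute mass between battlefields without risking a violation of $x^1\leq\cdots\leq x^n$. The technical heart of the argument is to show, using only the assumption $n\geq 3$, that enough flexibility exists in both players' supports to carry out the overbid of Step 1 and the downshift-plus-reinvest of Step 2. Boundary cases (such as $a=0$, or $j^\star\in\{1,n\}$, or atoms at a level at which the ordering $x^1\leq\cdots\leq x^n$ is already tight on a large fraction of $A$'s support) will require parallel but slightly distinct constructions of the compensating battlefield; in each, the third battlefield guaranteed by $n\geq 3$ supplies the extra degree of freedom that the degenerate two-battlefield case lacks and that makes the deviations feasible.
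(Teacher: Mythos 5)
Your proposal follows the classical all-pay-auction template (establish a gap in the opponent's marginal just below the atom, then undercut the atom and reinvest the savings), which is a genuinely different route from the paper's. The paper never establishes such a gap. Instead it shows that every pure strategy in the support of $P_A$ passing through the atom must have a ``staircase'' form (runs of equal values separated by atoms, starting from a possible run of zeros), deduces from a cardinality argument that some pure strategy all of whose coordinates are atoms is played with \emph{positive probability}, and then derives a contradiction by comparing the atom masses $f_k(c_k)$ across battlefields --- using deviations that sometimes move \emph{down} on the atom coordinate and up elsewhere, with the strict gain coming directly from converting ties against atom mass into wins (or losses into wins).

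The difference matters because your argument has a genuine gap precisely where you defer to ``parallel but slightly distinct constructions.'' The overbid in Step 1 is not always feasible under the ordering constraint: if the mass of $F_B^{j^\star}$ near $a$ comes from pure strategies of the form $(0,\dots,0,b,\dots,b)$ or $\left(\tfrac{1}{n},\dots,\tfrac{1}{n}\right)$, then raising coordinate $j^\star$ above $a$ forces raising an entire tail of coordinates to preserve $x^1\le\cdots\le x^n$, and there may be no earlier coordinate available to lower (they are all zero, or lowering them breaks the budget identity only by an amount that cannot cover the tail). These are exactly the configurations the paper's staircase lemma isolates and handles by case analysis (the cases $d_0=0$, $d_0>0$, and all coordinates equal to $\tfrac1n$), using inequalities such as $f_1\left(\tfrac1n\right)<\sum_{k\ge2}f_k\left(\tfrac1n\right)$ that follow from the ordering constraint itself; your proposal contains no substitute for this step. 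Separately, Step 2's claim that the freed budget can be reinvested for a \emph{strict} gain is unjustified as stated --- the parenthetical (``since otherwise the unused budget could already have been redirected'') is circular, and in principle the opponent's marginals could be flat everywhere the freed budget can legally be placed. The paper avoids this by always anchoring the strict gain to a positive atom mass rather than to a generic reinvestment. As written, the proposal is a plausible plan whose two load-bearing steps are exactly the ones not yet proved.
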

\begin{proof}
  Suppose that we have an equilibrium strategy $P_A$ with an atom in battlefield $j$ on $a_j$. 

  Let $p=(b_1,b_2,\dots,b_{j-1},b_j=a_j,b_{j+1},\dots,b_n)$ be any pure strategy in the support of $P_A$ that contains playing $a_j$ on battlefield $j$. 
  The general idea of this proof will be to find a pure strategy $p'$ that does strictly better against $P_A$ than $p$, thus reaching a contradiction that $P_A$ cannot be an equilibrium strategy as we supposed.

  Let $f_k(a)$ denote the possibility of choosing $a$ on battlefield $k$ in $P_A$.
  If $b_k$ is any point that is not an atom and $b_k$ is greater than $b_{k-1}$ (or greater than $0$ in the case of $b_1$), then consider the pure strategy $p'$ that plays $\epsilon$ lower on battlefield $k$ and plays $\epsilon' = \frac{\epsilon}{n-j+1}$ higher on battlefield $j$ and all the battlefields after that. 
  We can always find sufficiently small positive $\epsilon$ and $\epsilon'$ such that there is no atom between $b_k$ and $b_k-\epsilon$ in $P_A$. So the payoff of $p'$ against $P_A$ minus the payoff of $p$ against $P_A$ is at least 
  \[
  \frac{1}{n}f_k(a_j) - \delta
  \]
  for any $\delta>0$.
  Hence, $p'$ does strictly better than $p$ against $P_A$.

  Therefore, for $P_A$ to be an equilibrium strategy, all such $b_k$ that are not atoms must be equal to $b_{k-1}$ (or $0$ if $k=1$). So every pure strategy $p$ in the support of $P_A$ containing the atom $a_j$ on battlefield $j$ must be of the following form:
  a series of zeros in the first few battlefields (possibly none), an atom, the same level of force in the next few battlefields (also possibly none), another atom, the same level of force (as in the previous atom) in the next few battlefields, and so forth.

  Now, one of the following statements must be true:
  \begin{enumerate}
    \item 
      All $p$ in the support of $P_A$ containing the atom $a_j$ on battlefield $j$ is played with probability $0$.
    \item
      There exists some $p=(c_1,\dots,c_n)$ in the support of $P_A$ containing the atom $a_j$ on battlefield $j$ that is played with a positive probability, hence every $c_k$ is an atom on battlefield $k$.
  \end{enumerate}

  Suppose that statement $1$ is true. For $a_j$ to be played with some positive probability, there must be a continuum of such $p$. 
  Hence there must also be a continuum of atoms, which is clearly impossible. 
  So statement $2$ must be true. 
  Let $q = (c_1,\dots,c_n)$ be such a pure strategy in the support of $P_A$ where every $c_k$ is an atom on battlefield $k$. 
  Some casework is needed here:

  \begin{enumerate}
    \item 
      All the $c_k$ are the same. Then they must all be $\frac{1}{n}$. 
      In any pure strategy where player $A$ plays $\frac{1}{n}$ on the first battlefield, he must also play $\frac{1}{n}$ on all the other battlefields. 
      So $f_1\left(\frac{1}{n}\right)\le f_k\left(\frac{1}{n}\right)$ for all $k\ge 2$. 
      Hence,
      \[
      f_1\left(\frac{1}{n}\right) < \sum_{k=2}^n f_k\left(\frac{1}{n}\right).
      \]
      Consider the pure strategy $q'$ that plays $\left( \frac{1}{n}-\epsilon \right)$ on battlefield $1$ and plays $\left( \frac{1}{n}+\frac{\epsilon}{n-1} \right)$ on all the other battlefields. We can find a sufficiently small positive $\epsilon$ such that there are no atoms between $\frac{1}{n}$ and $\left( \frac{1}{n}-\epsilon \right)$ on battlefield $1$. The payoff of $q'$ against $P_A$ minus the payoff of $q$ against $P_A$ is at least 
      \[
      \frac{1}{n}\cdot \left( \sum_{k=2}^n f_k\left(\frac{1}{n}\right)- f_1\left(\frac{1}{n}\right)  \right) - \delta
      \]
      for any $\delta>0$.
      So $q'$ does strictly better against $P_A$ than $q$.

    \item
      All the $c_k$ fall into exactly two values, $d_0$ and $d_1$. ($d_0<d_1$) Suppose $q$ contains $m$ battlefields with level of force $d_0$ and then $(n-m)$ battlefields with level of force $d_1$.i
      \begin{enumerate}
        \item 
          If $d_0=0$, then $d_1 = \frac{1}{n-m}$. 
          Given any pure strategy in the support of $P_A$ that plays $d_1$ on battlefield $(m+1)$, it must also play $d_1$ on all the battlefields after that, and play $0$ on the battlefields $1$ to $m$. So $f_{m+1}(d_1)\le f_k(c_k)$ where $k\neq m+1$. 
          Hence, 
          \[
          \sum_{k\neq m+1}f_k(c_k)>f_{m+1}(d_1=c_{m+1}).
          \]
          Consider the pure strategy $q'$ that plays $d_1-\epsilon$ on battlefield $(m+1)$ and plays $c_k+\frac{\epsilon}{n-1}$ on battlefield $k$ for all $k\neq m+1$.
          We can find a sufficiently small positive $\epsilon$ such that there are no atoms between $d_1$ and $d_1-\epsilon$ on battlefield $(m+1)$. The payoff of $q'$ against $P_A$ minus the payoff of $q$ against $P_A$ is at least:
          \[
          \frac{1}{n}\cdot\left(  \sum_{k\neq m+1}f_k(c_k) - f_{m+1}(d_1)  \right) 
          - \delta
          \]
          for any $\delta>0$.
          So $q'$ does strictly better against $P_A$ than $q$.
        \item
          If $d_0>0$, then at least one of the following two must be true:
          \label{twoValues}
          \begin{enumerate}
            \item 
              \[
              f_{m+1}(c_{m+1})< \sum_{k\neq m+1}f_k(c_k).
              \]
            \item
              \[
              \sum_{k=m+1}^nf_k(c_k)>\sum_{k=1}^mf_k(c_k).
              \]
          \end{enumerate}
          Similar to the arguments above, if the first one is true, then we can construct a $q'$ by playing $\epsilon$ lower on battlefield $(m+1)$ and $\epsilon'$ higher on all the other battlefields; if the second one is true, then we can construct a $q'$ by playing $\epsilon$ higher on battlefield $(m+1)$ and all the battlefields after that, and playing $\epsilon'$ lower on battlefields $1$ to $m$. 
          In either case, $q'$ does strictly better than $q$ against $P_A$.
      \end{enumerate}
    \item 
      All the $c_k$ fall into at least three different values.
      So from these values we can choose two different values that are not zero. Then we apply the proof in \cref{twoValues} and obtain the needed pure strategy $q'$.
  \end{enumerate}
  In all the cases, a contradiction is reached, showing that $P_A$ cannot be an equilibrium strategy.
\end{proof}


In the following discussions, let $P = \left\{ F^j \right\}_{j=1}^3$ be any joint distribution characterized in \Cref{thm:uniqueCDFACB113}, and let $P' = \left\{ f^j \right\}_{j=1}^3$ be any equilibrium strategy. Our goal is to prove that $P$ is an equilibrium strategy, and that $P$ and $P'$ have the same marginal distributions.
\begin{lemma}
  Suppose $p = (a,b,c)$ is any pure strategy in $\mathfrak{B}_A=\mathfrak{B}_B=\mathfrak{B}$. Then the payoff of $p$ against $P$ is $\frac{1}{2}$ if $0\le a\le\frac{1}{3}$, $\frac{1}{6}\le b\le\frac{1}{2}$, and $\frac{1}{3}\le c\le \frac{2}{3}$; and the payoff is less than $\frac{1}{2}$ otherwise.
  \label{lem:payoffPureAgainstP}
\end{lemma}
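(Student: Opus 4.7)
The plan is to reduce the claim to a direct computation on a few sub-regions of $\mathfrak{B}$, exploiting the piecewise-linear form of $F^1, F^2, F^3$.

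First I would write the payoff of the pure strategy $p=(a,b,c)$ against the marginals $P$ simply as
\[
\pi(a,b,c) \;=\; \tfrac{1}{3}\bigl(F^1(a)+F^2(b)+F^3(c)\bigr),
\]
since each battlefield contributes $\frac{1}{n}F^j(\cdot)$ to the expected payoff and the atoms are irrelevant (the $F^j$ in the statement are continuous). Next I would record the automatic inequalities that come from $p\in\mathfrak{B}$: since $a\le b\le c$ and $a+b+c=1$, we get $a\le\tfrac13$, $c\ge\tfrac13$, and $b\le\tfrac12$ (the last because $b>\tfrac12$ would force $c\ge b>\tfrac12$ and violate the budget). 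So the only conditions in the hypothesis that can actually fail are $b\ge\tfrac16$ and $c\le\tfrac23$.

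Then I would do the computation in three cases. In the ``inside'' case $b\in[\tfrac16,\tfrac12]$ and $c\in[\tfrac13,\tfrac23]$ (which forces $a\in[0,\tfrac13]$), each $F^j$ lies in its linear piece, and
\[
\pi(a,b,c) \;=\; \tfrac{1}{3}\bigl(3a + (3b-\tfrac12) + (3c-1)\bigr) \;=\; \tfrac{1}{3}\bigl(3(a+b+c)-\tfrac32\bigr)\;=\;\tfrac12,
\]
using $a+b+c=1$. In the case $b<\tfrac16$, the ordering forces $a\le b<\tfrac16$, hence $a+b<\tfrac13$ and $c=1-a-b>\tfrac23$, so $F^2(b)=0$, $F^3(c)=1$, $F^1(a)=3a$, and $\pi=\tfrac{1}{3}(3a+1)\le\tfrac13(3b+1)<\tfrac13(\tfrac12+1)=\tfrac12$. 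In the case $c>\tfrac23$ with $b\ge\tfrac16$, we have $a+b<\tfrac13$ so $a<\tfrac13$, hence $F^1(a)=3a$, $F^3(c)=1$, and $F^2(b)=3b-\tfrac12$ (since $b\le\tfrac12$ is automatic), giving
\[
\pi \;=\; \tfrac{1}{3}\bigl(3a+3b-\tfrac12+1\bigr)\;=\;\tfrac{1}{3}\bigl(3(1-c)+\tfrac12\bigr)\;=\;\tfrac{7}{6}-c\;<\;\tfrac12,
\]
because $c>\tfrac23$.

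Nothing here is really an obstacle; the only mild care is to be sure the three cases exhaust all $p\in\mathfrak{B}$ that lie outside the prescribed box, which is handled by the automatic inequalities step. The computation then closes the argument cleanly and matches the boundary value $\pi=\tfrac12$ exactly on $b=\tfrac16$ or $c=\tfrac23$, confirming that the inequality in the second half of the lemma is strict precisely on the open complement.
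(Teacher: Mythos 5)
Your proposal is correct and follows essentially the same route as the paper: write the payoff as $\tfrac13\bigl(F^1(a)+F^2(b)+F^3(c)\bigr)$, note the automatic bounds $a\le\tfrac13$, $b\le\tfrac12$, $c\ge\tfrac13$ forced by $\mathfrak{B}$, and do casework on $b<\tfrac16$ versus $b\ge\tfrac16$. The only cosmetic difference is that the paper merges your second and third cases by using the single inequality $F^3(c)\le 3c-1$ (with equality iff $c\le\tfrac23$), whereas you split them out explicitly; the content is identical.
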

\begin{proof}

  Suppose $A$ plays the mixed strategy $P$ and $B$ plays the pure strategy $p=(a,b,c)$, where $0\leq a\leq b\leq c$ and $a+b+c=1$. Then, let $W(a,b,c)$ be the payoff for $B$. So
  \[
  W(a,b,c)=\frac{1}{3}\left(F^1(a)+F^2(b)+F^3(c)\right)
  \]
  Our goal is to find the maximum value of $W(a,b,c)$ in $\mathfrak{B}$ and to show that it is no greater than $0$. 

  Clearly, $0\leq a\leq \frac{a+b+c}{3} = \frac{1}{3}$, so $F^1(a)=3a$. And $b\leq \frac{b+c}{2}\leq\frac{1}{2}$
  \begin{itemize}
    \item If $b<\frac{1}{6}$, then $c=1-a-b\geq 1-2b>\frac{2}{3}$, so $F^2(b)=0$ and $F^3(c)=1$. And $a\leq b<\frac{1}{6}$
      \begin{align*}
        W(a,b,c)&=\frac{1}{3}(3a+0+1)\\
        &< \frac{1}{3}(3\cdot\frac{1}{6}+0+1)\\
        &=\frac{1}{2}.
      \end{align*}
      So $W(a,b,c)<\frac{1}{2}$.
    \item If $b\geq\frac{1}{6}$, then $F^2(b)=-\frac{1}{2}+3b$. Since $c\geq\frac{1}{3}$, $F^3(c)\leq 3c-1$.
      \begin{align*}
        W(a,b,c)&\leq\frac{1}{3}(3a-\frac{1}{2}+3b+3c-1)\\
        &= (a+b+c)-\frac{1}{2}\\
        &=\frac{1}{2}.
      \end{align*}
      Equality holds if and only if $F^3(c) = 3c-1$, which is equivalent to $\frac{1}{3}\le c\le \frac{2}{3}$. In this case $0\le a\le\frac{1}{3}$, $\frac{1}{6}\le b\le\frac{1}{2}$, and $\frac{1}{3}\le c\le \frac{2}{3}$. 

      Otherwise, equality does not hold, and the payoff is less than $\frac{1}{2}$.
  \end{itemize}
\end{proof}

\begin{lemma}
  Any joint strategy $P$ as characterized in \Cref{thm:uniqueCDFACB113} is a Nash equilibrium strategy.
  \label{lem:isNE}
\end{lemma}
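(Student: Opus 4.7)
The plan is to leverage Lemma \ref{lem:payoffPureAgainstP} together with a support-containment observation. The first step is to verify that every joint distribution $P$ with the marginals $F^1,F^2,F^3$ from Theorem \ref{thm:uniqueCDFACB113} and support in $\mathfrak{B}$ is in fact supported inside the box
\[
R := \left[0,\tfrac{1}{3}\right] \times \left[\tfrac{1}{6},\tfrac{1}{2}\right] \times \left[\tfrac{1}{3},\tfrac{2}{3}\right].
\]
This is immediate from the general fact that the support of a joint distribution projects into the support of each marginal, together with the observation that each $F^j$ has support exactly equal to the $j$-th interval appearing in $R$.

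Once that containment is in hand, I would establish the no-profitable-deviation condition. Suppose $A$ plays $P$ while $B$ plays an arbitrary mixed strategy $Q$ with support in $\mathfrak{B}$. By linearity of expectation, $B$'s expected payoff equals the integral with respect to $Q$ of the pure-strategy payoff against $P$ of each $(a,b,c) \in \mathfrak{B}$. Lemma \ref{lem:payoffPureAgainstP} bounds this integrand by $\tfrac{1}{2}$, so $B$'s expected payoff is at most $\tfrac{1}{2}$. By the symmetric argument (swapping the roles of $A$ and $B$), $A$ also cannot exceed $\tfrac{1}{2}$ against an opponent playing $P$.

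Finally, I would compute the payoff at the profile $(P,P)$ directly. Since $\operatorname{supp}(P) \subseteq R$, the equality case of Lemma \ref{lem:payoffPureAgainstP} ensures that the pure-strategy payoff against $P$ equals exactly $\tfrac{1}{2}$ for every $(a,b,c)$ in the support of $P$. Integrating against $P$ gives an expected payoff of exactly $\tfrac{1}{2}$ to each player. Combined with the previous paragraph, this certifies that $(P,P)$ is a Nash equilibrium, since neither player can strictly improve on $\tfrac{1}{2}$ by deviating. The main conceptual ingredient is really the support-containment observation that funnels every realization of $P$ into the equality region of Lemma \ref{lem:payoffPureAgainstP}; everything else is a routine application of linearity of expectation, so I do not anticipate any serious obstacle.
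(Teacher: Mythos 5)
Your proof is correct and follows essentially the same route as the paper: both arguments rest on Lemma \ref{lem:payoffPureAgainstP} together with the constant-sum structure to conclude that neither player can exceed $\tfrac{1}{2}$ against $P$. The only difference is cosmetic --- the paper infers that the profile $(P,P)$ yields $\tfrac{1}{2}$ to each player from the symmetry of the game, whereas you compute this directly from the support containment in $R$ and the equality case of the lemma.
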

\begin{proof}
  We know that the game $ACB(1,1,3)$ is symmetrical and has constant sum $1$, and since \Cref{lem:payoffPureAgainstP} indicates that $P$ gives a payoff of at least $\frac{1}{2}$ against any pure strategy, so $P$ must be an equilibrium strategy.
\end{proof}

Let $\bar{s}^1 = \frac{1}{3}$, $\underline{s}^1 = 0$, $\bar{s}^2 = \frac{1}{2}$, $\underline{s}^2 = \frac{1}{6}$, $\bar{s}^3 = \frac{1}{3}$, $\underline{s}^3 = \frac{2}{3}$.
Clearly, $\bar{s}^j$ is just the upper bound of $P$ on battlefield $j$, and $\underline{s}^j$ is the lower bound.
\begin{lemma}
  $F^j(x^j) = \frac{x^j-\underline{s}^j}{\bar{s}^j - \underline{s}^j}$ for $\underline{s}^j\le x^j\le \bar{s}^j$ and all $j$. 
  \label{lem:FJ}
\end{lemma}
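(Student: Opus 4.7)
The plan is to treat this as a direct algebraic verification, since once the constants $\bar{s}^j$ and $\underline{s}^j$ have been introduced above the lemma, the claimed formula $\frac{x^j - \underline{s}^j}{\bar{s}^j - \underline{s}^j}$ is nothing more than a uniform-distribution rewriting of the middle (non-constant) piece of each $F^j$ from \Cref{thm:uniqueCDFACB113}. The content of the lemma is really organizational: it packages the three piecewise formulas into a single form that will be convenient for later arguments (for instance when computing the integrals $\int F^j\, df^j$ that appear in the Lagrangian of \Cref{prop:Lagrange}).

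I would verify each $j$ case in turn by direct substitution. For $j=1$, the formula gives $\frac{x - 0}{1/3 - 0} = 3x$, matching $F^1(x) = 3x$ on $[0,1/3]$. For $j=2$, it gives $\frac{x - 1/6}{1/2 - 1/6} = 3x - 1/2$, matching $F^2(x) = -1/2 + 3x$ on $[1/6,1/2]$. For $j=3$, it gives $\frac{x - 1/3}{2/3 - 1/3} = 3x - 1$, matching $F^3(x) = -1 + 3x$ on $[1/3,2/3]$. In each case the equality is an immediate computation.

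There is essentially no mathematical obstacle here; the only subtlety is cosmetic, namely making sure the labels $\bar{s}^j$ and $\underline{s}^j$ are read consistently so that $\underline{s}^j \le \bar{s}^j$ (the formula only makes sense as a CDF in that case). With that convention, so that $\underline{s}^3 = 1/3$ and $\bar{s}^3 = 2/3$ denote the lower and upper endpoints of the support of $F^3$, the verification for each battlefield reduces to a single line. Accordingly, I would write the proof in three short bullet-style calculations (one per $j$) and conclude by noting that $F^j$ vanishes for $x^j < \underline{s}^j$ and equals $1$ for $x^j > \bar{s}^j$, so the piecewise description in \Cref{thm:uniqueCDFACB113} is fully recovered.
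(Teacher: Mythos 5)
Your proof is correct and takes essentially the same route as the paper, which simply declares the identity self-evident from the piecewise formulas in \Cref{thm:uniqueCDFACB113}; you merely spell out the three substitutions explicitly. You also rightly note (and silently correct) the swapped labels $\bar{s}^3$ and $\underline{s}^3$ in the paper's definition, which is the only point of friction in an otherwise immediate verification.
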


\begin{proof}
  This is self-evident from the representation of $F^j$ in \Cref{thm:uniqueCDFACB113}:
  \begin{align*}
    F^1(u)          & =                    
    \begin{cases}
      3u              & 0\leq u\leq\frac{1}{3}           \\
      1               & \frac{1}{3}<u\leq 1              
    \end{cases}
    \\
    F^2(u)          & =                      
    \begin{cases}
      0               & 0\leq u<\frac{1}{6}              \\	
      -\frac{1}{2}+3u & \frac{1}{6}\leq u\leq\frac{1}{2} \\
      1               & \frac{1}{2}<u\leq 1              
    \end{cases}
    \\
    F^3(u)          & =                       
    \begin{cases}
      0               & 0\leq u<\frac{1}{3}              \\	
      -1+3u           & \frac{1}{3}\leq u\leq\frac{2}{3} \\
      1               & \frac{2}{3}<u\leq 1.              
    \end{cases}
  \end{align*}
\end{proof}

\begin{lemma}
  If $x < \underline{s}^j$, then $f^j(x) = 0$. If $x > \bar{s}^j$, then $f^j(x) = 1$. Or, in other words, $P'$ does not place any strategy outside $\left[ \underline{s}^j, \bar{s}^j \right]$.
  \label{lem:supportP}
\end{lemma}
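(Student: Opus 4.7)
The plan is to combine Lemma \ref{lem:isNE} and Theorem \ref{thm:uniqueNEPayoff} to pin down the payoff of $P'$ when played against $P$, and then to invoke the equality condition in Lemma \ref{lem:payoffPureAgainstP} to force the support of $P'$ into the region where that bound is tight.

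By Lemma \ref{lem:isNE}, $P$ is a Nash equilibrium strategy, so $(P,P)$ is an equilibrium, and since $ACB(1,1,3)$ is a symmetric two-player constant-sum game with total payoff $1$, the common equilibrium payoff is $\tfrac{1}{2}$. In a two-player zero-sum game, any equilibrium strategy of one player paired with any equilibrium strategy of the other is again an equilibrium (the standard interchangeability property), so $(P, P')$ is itself an equilibrium, and by Theorem \ref{thm:uniqueNEPayoff} the payoff of $P'$ against $P$ equals $\tfrac{1}{2}$. Written as an integral over pure strategies this reads
\[
\int_{\mathfrak{B}} W(a,b,c) \, dP'(a,b,c) \;=\; \tfrac{1}{2},
\]
where $W(a,b,c) = \tfrac{1}{3}\bigl(F^1(a) + F^2(b) + F^3(c)\bigr)$ is the payoff of the pure strategy $(a,b,c)$ against $P$ as computed in the proof of Lemma \ref{lem:payoffPureAgainstP}.

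That same lemma also shows $W \le \tfrac{1}{2}$ pointwise on $\mathfrak{B}$, with strict inequality off the region
\[
R \;=\; \bigl\{ (a,b,c) \in \mathfrak{B} : \tfrac{1}{6} \le b \le \tfrac{1}{2},\ \tfrac{1}{3} \le c \le \tfrac{2}{3} \bigr\}.
\]
The nonnegative function $\tfrac{1}{2} - W$ therefore integrates to zero against $P'$, and hence vanishes $P'$-almost surely, so $P'$ assigns probability zero to $\mathfrak{B}\setminus R$. Projecting onto the $j$-th coordinate yields $\mathrm{supp}(f^j) \subseteq [\underline{s}^j,\bar{s}^j]$ for each $j$, which is exactly the claim that $f^j(x) = 0$ for $x < \underline{s}^j$ and $f^j(x) = 1$ for $x > \bar{s}^j$.

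The main step deserving care is the assertion that $P'$ achieves value $\tfrac{1}{2}$ against $P$: this rests on the fact that a Nash equilibrium strategy in a two-player zero-sum game is precisely a max-min (respectively min-max) strategy, so any such strategy necessarily attains the value against every equilibrium strategy of the opponent. Theorem \ref{thm:noAtom} additionally rules out atom-based pathology in $P'$ that could otherwise obscure the boundary of $R$, so the measure-theoretic step reducing expected tightness to pointwise tightness is clean.
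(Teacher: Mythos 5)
Your proof is correct and follows essentially the same route as the paper's: use the fact that an equilibrium strategy must earn the unique equilibrium value $\tfrac{1}{2}$ against $P$ (via interchangeability of equilibria in a constant-sum game), then invoke the equality condition of Lemma~\ref{lem:payoffPureAgainstP} to confine the support. Your measure-theoretic step (integrating the nonnegative deficit $\tfrac{1}{2}-W$ and using continuity of $W$) is a more careful justification of the paper's one-line assertion that every pure strategy in the support of $P'$ attains the value, but it is the same argument in substance.
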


\begin{proof}
  Since $ACB(1,1,3)$ is a two player symmetric constant sum $1$ game, every pure strategy in the support of $P'$, an equilibrium strategy, must give the unique equilibrium payoff, $\frac{1}{2}$, when played against another equilibrium strategy, $P$. From \Cref{lem:payoffPureAgainstP} we know that a pure strategy $p$ only gives payoff $\frac{1}{2}$ against $P$ when $p$ plays a level of force between $\underline{s}^j$ and $\bar{s}^j$ on battlefield $j$ for all $j$.
  So $P'$ cannot play any strategy outside that range.
\end{proof}

\begin{corollary}
  $f^j(\underline{s}^j) = 0$ and $f^j(\bar{s}^j) = 1$.
  \label{endPointsP}
\end{corollary}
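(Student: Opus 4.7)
The plan is to combine two inputs already available in the text: the support restriction from \Cref{lem:supportP} (which gives $f^j(x)=0$ for $x<\underline{s}^j$ and $f^j(x)=1$ for $x>\bar{s}^j$) with the no-atom result of \Cref{thm:noAtom} (which, specialized to the univariate marginals, forces each $f^j$ to be continuous rather than merely right-continuous). These two facts together pin down the values at the endpoints.

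For the upper endpoint, I would simply invoke right-continuity of the cdf $f^j$: since $f^j(x)=1$ for all $x>\bar{s}^j$ by \Cref{lem:supportP}, taking the right-hand limit at $\bar{s}^j$ gives $f^j(\bar{s}^j)=\lim_{x\to(\bar{s}^j)^+}f^j(x)=1$. No appeal to the no-atom theorem is needed here.

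For the lower endpoint, right-continuity alone is not enough, since an atom at $\underline{s}^j$ would be consistent with $f^j(x)=0$ for all $x<\underline{s}^j$. This is where \Cref{thm:noAtom} is essential: because $P'$ contains no atoms, the marginal $f^j$ has no jump at $\underline{s}^j$, so $f^j$ is left-continuous there as well. Hence $f^j(\underline{s}^j)=\lim_{x\to(\underline{s}^j)^-}f^j(x)=0$ by \Cref{lem:supportP}.

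The main, and essentially only, subtlety is the asymmetry between the two endpoints in what ingredients they require: the upper-endpoint identity is immediate from the defining properties of a cdf, while the lower-endpoint identity genuinely relies on the nontrivial no-atom theorem. I do not anticipate any technical obstacle beyond flagging this distinction clearly.
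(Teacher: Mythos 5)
Your proof is correct and follows essentially the same route as the paper's: combining the support restriction of \Cref{lem:supportP} with the absence of atoms from \Cref{thm:noAtom} to pin down the endpoint values. Your observation that the upper endpoint needs only right-continuity of the cdf while the lower endpoint genuinely requires the no-atom theorem is a slightly finer accounting than the paper's one-line appeal to continuity, but it is the same argument.
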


\begin{proof}
  \Cref{thm:noAtom} implies that $f^j$ is continuous. This, together with \Cref{lem:supportP}, gives the desired result.
\end{proof}

Let us recall player $i$'s optimization problem for $ACB(1,1,3)$ (\cref{eq:Lagrange} in \Cref{prop:Lagrange}):
\begin{equation}
  \max\limits_{\left\{F_{i}^ j\right\}_{j=1}^3}\lambda_i\sum_{j=1}^3\left[\int_0^\infty \left[ \frac{1}{3\lambda_i}F^j_{-i}(x)-x \right]dF^j_i \right]+\lambda_iX_i
  \label{eq:Lagrange113}
\end{equation}
where the set of univariate marginal distribution functions $\left\{F_{i}^ j\right\}_{j=1}^3$ satisfy the constraint that there exists a $3$-copula $C$ such that the support of the $3$-variate distribution 
\[
C\left( F^1_i\left(x^1\right),F^2_i\left(x^2\right),F^3_i\left(x^3\right) \right)
\] 
is contained in $\mathfrak{B}_i$. 

From the lemmas above, we can add some further restrictions to it. From \Cref{lem:supportP}, we know that $P_i$ must be played within $\left[ \underline{s^j},\overline{s^j} \right]$ for every battlefield $j$. From \Cref{lem:isNE}, we know that $P$ is an equilibrium strategy, so $P_i$ must be a best response against $P$ and vice versa. 
Since \Cref{thm:existCopula} establishes the existence of suitable $3$-copula, we can disregard that restriction for now and focus on the rest.

For different $j$, $F_i^j$ is independent. So \cref{eq:Lagrange113} is the maximization of three independent sums, hence the sum of three independent maximizations:
\begin{multline*}
  \max\limits_{\left\{F_{i}^ j\right\}_{j=1}^3}\lambda_i\sum_{j=1}^3\left[\int_0^\infty \left[ \frac{1}{3\lambda_i}F^j_{-i}(x)-x \right]dF^j_i \right]+\lambda_iX_i \\
  = \sum_{j=1}^3 \max\limits_{F_{i}^ j}\lambda_i\int_0^\infty \left[ \frac{1}{3\lambda_i}F^j_{-i}(x)-x \right]dF^j_i +\lambda_iX_i.
\end{multline*}
The term $\lambda_iX_i$ is just a constant, so we could throw that away. Thus the problem for player $i$ becomes:
\[
\max\limits_{F_{i}^ j}\lambda_i\int_0^\infty \left[ \frac{1}{3\lambda_i}F^j_{-i}(x)-x \right]dF^j_i 
\]
for all battlefields $j$, under the constraint that $P_A$ is a best response against $P$, $P$ is a best response against $P_A$, and $P_A$ is played within $\left[ \underline{s}^j,\bar{s}^j \right]$. Let us set $P_A = P' = \left\{ f^j \right\}_{j=1}^3$. 
Since we assume the existence of a suitable $3$-copula, the different $f^j$ can be considered independent and the different maximizations for different battlefields can also be considered independent.
Hence, $f^j$ and $F^j$ form an equilibrium for all $j$.

Let $B_i^j(x_i^j,F^j_{-i}) = \lambda_i\left( \frac{1}{3\lambda}F_{-i}^j\left( x^j_i \right) -x_i^j \right)$. This is the payoff for player $i$ by playing $x_i^j$ when player $-i$ plays $F^j_{-i}$ in the maximization problem for battlefield $j$.

\begin{lemma}
  $B^j_i(x^j,f^j) = \lambda_i\left( \frac{1}{3\lambda_i}f^j\left( x^j \right) -x^j \right)$ is constant for all $\underline{s}^j\le x^j\le \bar{s}^j$.
  \label{lem:constantB}
\end{lemma}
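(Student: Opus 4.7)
The plan is to read off constancy of $B^j_i(\cdot,f^j)$ on $[\underline{s}^j,\bar{s}^j]$ from the fact that $F^j$, which has full support on this interval, is a best response to $f^j$ in the single-battlefield reduced problem. The reduction is legitimate because the existence of a suitable copula (Theorem~\ref{thm:existCopula}) decouples the three marginal optimizations arising from Proposition~\ref{prop:Lagrange}, so the overall best-response condition splits into one best-response condition per battlefield.

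Concretely, I would first combine Lemma~\ref{lem:isNE}, which asserts that $P=\{F^j\}_{j=1}^3$ is an equilibrium strategy, with the standing assumption that $P'=\{f^j\}_{j=1}^3$ is also an equilibrium strategy. Being two equilibrium strategies of a symmetric constant-sum game, each is a best response to the other. After decoupling, this yields that $F^j$ is a best response to $f^j$ in the single-battlefield problem $\max_{F^j_i}\int B^j_i(x,f^j)\,dF^j_i(x)$ for each $j$. Next, I would note that $B^j_i(x,f^j)=\tfrac{1}{3}f^j(x)-\lambda_i x$ is continuous in $x$, since Theorem~\ref{thm:noAtom} forbids atoms in $f^j$, and that the support of $F^j$ is exactly $[\underline{s}^j,\bar{s}^j]$ because Lemma~\ref{lem:FJ} gives $F^j$ a strictly positive constant density on that interval.

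The final step is the standard best-response argument. Suppose for contradiction that $B^j_i(x_1,f^j)<B^j_i(x_2,f^j)$ for some $x_1,x_2\in[\underline{s}^j,\bar{s}^j]$. By continuity, there exist open neighborhoods $U_1\ni x_1$ and $U_2\ni x_2$ within the interval on which the strict inequality persists uniformly. Since $F^j$ has positive density on $U_1$, a perturbed marginal $\tilde F^j$ obtained by shifting an arbitrarily small positive amount of mass from $U_1$ to $U_2$ strictly increases $\int B^j_i(x,f^j)\,dF^j_i(x)$, contradicting the best-response property of $F^j$. The only subtle point is whether such a perturbation of a single marginal is consistent with some valid joint strategy supported in $\mathfrak{B}_i$, and this is precisely what the copula framework of Theorem~\ref{thm:existCopula} licenses, so the argument goes through.
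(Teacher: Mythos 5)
Your proof is correct and follows essentially the same route as the paper: the paper simply invokes the standard fact that every pure strategy in the support of an equilibrium (here, all of $[\underline{s}^j,\bar{s}^j]$, by \Cref{lem:FJ}) yields the same payoff against the opponent's equilibrium strategy. You additionally spell out the underlying continuity-plus-mass-shifting argument behind that fact, which the paper leaves implicit.
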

\begin{proof}
  Since $F^j$ is an equilibrium strategy against $f^j$, every strategy in the support of $F$ gives a constant payoff against $f^j$. Since the support of $F^j$ is $\left[ \underline{s}^j , \bar{s}^j \right]$, the result directly follows.
\end{proof}

\begin{lemma}
  $B^j_i(x^j,f^j) = \lambda_i\left( \frac{1}{3\lambda_i}f^j\left( x^j \right) -x^j \right) = -\lambda_i\underline{s}^j = \frac{1}{3}-\lambda_i\bar{s}^j$ for all $\underline{s}^j\le x^j\le \bar{s}^j$.
  \label{lem:valueB}
\end{lemma}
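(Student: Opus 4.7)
The plan is to evaluate the constant $B^j_i(x^j,f^j)$ at the two endpoints of the interval $[\underline{s}^j,\bar{s}^j]$ and observe that both values must coincide with the constant value guaranteed by \Cref{lem:constantB}.

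First I would invoke \Cref{lem:constantB} to conclude that $B^j_i(x^j,f^j)$ takes a single value throughout $[\underline{s}^j,\bar{s}^j]$. Next, using \Cref{endPointsP}, which tells us $f^j(\underline{s}^j)=0$ and $f^j(\bar{s}^j)=1$, I would plug $x^j=\underline{s}^j$ into the definition of $B^j_i$ to get
\[
B^j_i(\underline{s}^j,f^j)=\lambda_i\left(\tfrac{1}{3\lambda_i}\cdot 0-\underline{s}^j\right)=-\lambda_i\underline{s}^j,
\]
and plug $x^j=\bar{s}^j$ to get
\[
B^j_i(\bar{s}^j,f^j)=\lambda_i\left(\tfrac{1}{3\lambda_i}\cdot 1-\bar{s}^j\right)=\tfrac{1}{3}-\lambda_i\bar{s}^j.
\]
Since the value of $B^j_i(\,\cdot\,,f^j)$ is the same constant at every point of $[\underline{s}^j,\bar{s}^j]$, these two expressions are equal to one another and to $B^j_i(x^j,f^j)$ for every $\underline{s}^j\le x^j\le\bar{s}^j$, which is exactly the claimed identity.

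There is no real obstacle here: the lemma is essentially a direct computation, with the only subtle point being the justification that $f^j$ genuinely attains the values $0$ and $1$ at the endpoints. That justification is precisely what \Cref{endPointsP} provides, via \Cref{thm:noAtom} (continuity of $f^j$) combined with \Cref{lem:supportP} (no mass is placed outside $[\underline{s}^j,\bar{s}^j]$). Once those are in hand the proof is a one-line evaluation at each endpoint.
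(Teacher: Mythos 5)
Your proposal is correct and matches the paper's proof exactly: evaluate $B^j_i$ at the two endpoints using \Cref{endPointsP} ($f^j(\underline{s}^j)=0$, $f^j(\bar{s}^j)=1$) and then conclude by the constancy guaranteed in \Cref{lem:constantB}. The additional remarks on why the endpoint values are attained (continuity from \Cref{thm:noAtom} plus \Cref{lem:supportP}) are a faithful account of how \Cref{endPointsP} is itself justified.
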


\begin{proof}
  From \Cref{endPointsP}, $B^j_i(\underline{s}^j,f^j) = -\lambda_i\underline{s}^j$, and $B^j_i(\bar{s}^j,f^j) =  \frac{1}{3}-\lambda_i\bar{s}^j $. The result directly follows from \Cref{lem:constantB}.
\end{proof}

\begin{lemma}
  $\lambda_i = 1$ for all $i$.
  \label{lem:valueLambda}
\end{lemma}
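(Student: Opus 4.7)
The plan is to read off $\lambda_i$ directly from the two equivalent expressions in Lemma \ref{lem:valueB}, which already does almost all of the heavy lifting. That lemma tells us
\[
B_i^j(x^j,f^j) = -\lambda_i \underline{s}^j = \tfrac{1}{3} - \lambda_i \bar{s}^j
\]
for every battlefield $j$, so in particular the two endpoint evaluations must agree. Rearranging the resulting equality gives the single scalar identity
\[
\lambda_i\bigl(\bar{s}^j - \underline{s}^j\bigr) = \tfrac{1}{3},
\]
valid for each $j \in \{1,2,3\}$.

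Now I would simply substitute the numerical values of $\bar{s}^j$ and $\underline{s}^j$ fixed just before Lemma \ref{lem:FJ}. For $j=1$ we have $\bar{s}^1 - \underline{s}^1 = \tfrac{1}{3} - 0 = \tfrac{1}{3}$; for $j=2$, $\bar{s}^2 - \underline{s}^2 = \tfrac{1}{2} - \tfrac{1}{6} = \tfrac{1}{3}$; for $j=3$, $\bar{s}^3 - \underline{s}^3 = \tfrac{2}{3} - \tfrac{1}{3} = \tfrac{1}{3}$. So in each case the identity above reads $\tfrac{1}{3}\lambda_i = \tfrac{1}{3}$, giving $\lambda_i = 1$. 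Since nothing in the derivation distinguishes between $i = A$ and $i = B$ (the game has equal levels of force and Lemma \ref{lem:valueB} applies to both players), this establishes the result for all $i$.

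There is essentially no obstacle in this step; the serious content is already packaged into Lemma \ref{lem:valueB}. The only thing worth noting is that we obtain the same value of $\lambda_i$ from all three battlefields, which is a necessary consistency check: if the three differences $\bar{s}^j - \underline{s}^j$ had not all been equal, the Lagrangian formulation together with Lemma \ref{lem:valueB} would have been inconsistent, and no equilibrium of this shape could exist. The fact that the three supports have equal length is therefore not a coincidence but a structural feature dictated by the common Lagrange multiplier $\lambda_i$, and it is exactly this coincidence that makes $\lambda_i = 1$ well-defined.
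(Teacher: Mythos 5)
Your proof is correct and follows essentially the same route as the paper: both read off $\lambda_i(\bar{s}^j - \underline{s}^j) = \tfrac{1}{3}$ from the two endpoint evaluations in \Cref{lem:valueB} and then substitute the common support length $\tfrac{1}{3}$. Your version merely spells out the algebra and the consistency check across the three battlefields more explicitly (and quietly corrects the paper's transposition of $\bar{s}^3$ and $\underline{s}^3$ in the definitions preceding \Cref{lem:FJ}).
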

\begin{proof}
  From \Cref{lem:valueB}, we have $\lambda_i = \frac{1}{3\left(\bar{s}^j - \underline{s}^j\right)}$. Note that $\left(\bar{s}^j - \underline{s}^j\right)$ is always $\frac{1}{3}$ for all $j$, so $\lambda_i = 1$.
\end{proof}

\begin{lemma}
  $f^j(x^j) = F^j(x^j)$ for all $j$ and all $x^j$.
  \label{lem:uniqueMarginalDistribution}
\end{lemma}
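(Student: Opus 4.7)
The plan is to directly compute $f^j$ from the equations already established and compare with the explicit formula for $F^j$. The lemma essentially amounts to extracting $f^j$ from the equilibrium condition recorded in \Cref{lem:valueB}, substituting in the value of $\lambda_i$ from \Cref{lem:valueLambda}, and checking the result matches \Cref{lem:FJ}.

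First I would handle the interior region $\underline{s}^j \le x^j \le \bar{s}^j$. From \Cref{lem:valueB}, for any $x^j$ in this range,
\[
\lambda_i\!\left(\tfrac{1}{3\lambda_i} f^j(x^j) - x^j\right) = -\lambda_i \underline{s}^j.
\]
Applying \Cref{lem:valueLambda} to set $\lambda_i = 1$ and solving for $f^j(x^j)$ yields
\[
f^j(x^j) = 3(x^j - \underline{s}^j) = \frac{x^j - \underline{s}^j}{\bar{s}^j - \underline{s}^j},
\]
where the second equality uses $\bar{s}^j - \underline{s}^j = \tfrac{1}{3}$. By \Cref{lem:FJ} this is exactly $F^j(x^j)$ on the interior.

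Next I would handle the tails. For $x^j < \underline{s}^j$, \Cref{lem:supportP} gives $f^j(x^j) = 0$, which matches the definition of $F^j(x^j)$ from \Cref{thm:uniqueCDFACB113}. For $x^j > \bar{s}^j$, \Cref{lem:supportP} gives $f^j(x^j) = 1$, which again matches $F^j(x^j)$. Combining the three regions yields $f^j \equiv F^j$ on all of $\mathbb{R}$.

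There is essentially no obstacle: all the real work was already done in establishing \Cref{lem:constantB,lem:valueB,lem:valueLambda,lem:supportP}. The only minor point to verify is that the boundary values $x^j = \underline{s}^j$ and $x^j = \bar{s}^j$ are consistent, which is immediate from \Cref{endPointsP} together with the absence of atoms (\Cref{thm:noAtom}). So the proof reduces to a one-line substitution plus citing the support lemma.
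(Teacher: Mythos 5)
Your proof is correct and follows essentially the same route as the paper: derive $f^j$ on $[\underline{s}^j,\bar{s}^j]$ from \Cref{lem:valueB} and \Cref{lem:valueLambda}, match it to $F^j$ via \Cref{lem:FJ}, and handle the tails using \Cref{lem:supportP} and \Cref{endPointsP}. Nothing further is needed.
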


\begin{proof}
  From \Cref{lem:valueB,lem:valueLambda}, we have $f^j(x^j) = \frac{x^j-\underline{s}^j}{\bar{s}^j - \underline{s}^j}$ for $\underline{s}^j\le x^j\le \bar{s}^j$ and all $j$.
  From \Cref{lem:FJ}, the value of $f^j$ coincides with the value of $F^j$ here.
  \Cref{endPointsP} ensures that $f^j$ and $F^j$ are the same elsewhere.
\end{proof}

With these lemmas, we can prove the uniqueness of the marginal distributions in the Nash equilibria of the game $ACB(1,1,3)$. We restate the theorem here for convenience.
\uniqueCDFACB*
\begin{proof}[Proof of~\Cref{thm:uniqueCDFACB113}]
  From \Cref{lem:isNE} we know that every joint distribution with marginal distribution functions as characterized above is a Nash equilibrium strategy, hence the second part of the theorem is proved.

  \Cref{lem:uniqueMarginalDistribution} establishes the uniqueness of marginal distributions of Nash equilibrium strategies, and proves that these marginal distributions are exactly those characterized above. Hence, we have proven the first part of the theorem.
\end{proof}

\section{Unique equilibrium payoffs of the game $ACB(X_A,X_B,2)$}
\label{sec:w2t}
In this section we find the unique equilibrium payoffs of all cases of the Asymmetric Colonel Blotto game involving only two battlefields. 

Suppose without loss of generality that $X_A=1$ and $X_B=t\leq1$.

Let $W_n(t)$ denote the payoff for $A$ in a Nash equilibrium in such a game with $n$ battlefields. From \Cref{thm:uniqueNEPayoff}, we know that $W_n(t)$ is well-defined.
\begin{theorem}
  \label{thm:w2t}
  $$W_2(t)=
  \begin{array}{ccc}
    \frac{k+2}{2k+2}, & \frac{2k}{2k+1}\leq t< \frac{2k+2}{2k+3} & $where $ k=0,1,2,\dots\\
  \end{array}
  $$
  See \Cref{fig:w2t} for a graphical representation.
\end{theorem}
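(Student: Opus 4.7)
The plan is to exhibit an explicit Nash equilibrium for each $t$ in each interval $[t_k, t_{k+1}) := \bigl[\tfrac{2k}{2k+1}, \tfrac{2k+2}{2k+3}\bigr)$ and apply \Cref{thm:uniqueNEPayoff}. First, I reduce the two-battlefield game to a one-dimensional problem: a feasible allocation for player $A$ is determined by $a \in [0, \tfrac{1}{2}]$ (since $a_2 = 1 - a \ge a_1$), and similarly for $B$ by $b \in [0, t/2]$. Setting $s = 1 - t$, a direct computation shows that $A$ wins battlefield $1$ iff $a > b$ and battlefield $2$ iff $b > a - s$; hence $A$ wins both iff $a - s < b < a$, wins exactly one iff $b \notin [a-s, a]$, and receives a single tie on one battlefield (overall payoff $\tfrac{3}{4}$) when $b \in \{a, a - s\}$.

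For each $k \ge 0$ and $t \in [t_k, t_{k+1})$, let $u = t/2 - ks \in [0, \tfrac{1}{2k+3})$. I propose the symmetric mixed profile in which $A$ uniformly randomizes over the $k+1$ atoms $x_i := \tfrac{u+s}{2} + is$ and $B$ uniformly randomizes over $y_j := \tfrac{u}{2} + js$, for $i, j \in \{0, \ldots, k\}$. The bound $u < s$ (equivalent to $t < t_{k+1}$) ensures $x_i \in [0, \tfrac{1}{2}]$ and $y_j \in [0, t/2]$, and because $x_i - y_j = \tfrac{s}{2} + (i - j)s$ never equals $0$ or $s$, no pair of pure strategies in the two supports produces a tie. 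A short case analysis gives $u(x_i, y_j) = 1$ when $i = j$ (A wins both battlefields) and $u(x_i, y_j) = \tfrac{1}{2}$ otherwise, so A's expected payoff is
\[
\frac{1}{(k+1)^2}\Bigl[(k+1) \cdot 1 + (k+1)k \cdot \tfrac{1}{2}\Bigr] = \frac{k+2}{2(k+1)} = \frac{k+2}{2k+2}.
\]

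It remains to certify equilibrium by two no-deviation inequalities. For any $a \in [0, \tfrac{1}{2}]$, the open interval $(a - s, a)$ of length $s$ contains at most one of B's $s$-spaced atoms, so A's expected payoff is at most $\tfrac{1}{2} + \tfrac{1}{2(k+1)} = \tfrac{k+2}{2(k+1)}$, with equality precisely on the support $\{x_i\}$. Symmetrically, for any $b \in [0, t/2]$, one verifies that either $b$ lies in one of the open windows $(x_i - s, x_i)$ (capturing exactly one atom, giving payoff $V$), or $b \in \{x_0, \ldots, x_{k-1}\}$, in which case $b = x_j$ ties with atom $x_j$ on battlefield $1$ and with atom $x_{j+1}$ on battlefield $2$; the two tie contributions of $\tfrac{3}{4}$ combine with the remaining $\tfrac{1}{2}$'s to again yield exactly $V$. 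The main obstacle is precisely this tie-bookkeeping at the boundary deviations $b = 0$ and $b = t/2$: for a careless choice of $x_0$ (say $x_0 = s$), the deviation $b = 0$ would tie with only A's smallest atom on battlefield $2$, pulling the average below $V$. The centered choice $x_0 = \tfrac{u+s}{2}$ is designed so that $x_0 < s$ strictly (using $u < s$), which ensures the leftmost window $(x_0 - s, x_0)$ extends below $0$ and covers $b = 0$ with the correct multiplicity. Given this verification, \Cref{thm:uniqueNEPayoff} immediately yields $W_2(t) = \tfrac{k+2}{2k+2}$ on $[t_k, t_{k+1})$.
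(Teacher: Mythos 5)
Your construction is correct and is essentially the paper's own proof: both sides play uniform mixtures over $k+1$ atoms spaced $s=1-t$ apart, your pure-vs-pure characterization ($A$ wins both battlefields iff $0<a-b<s$, with ties of value $\tfrac34$ at the endpoints) is exactly \Cref{lem:pureAgainstPureNEq2}, the equilibrium payoff computation is identical, and the no-deviation checks are the same window-counting argument, with your centered offset $\tfrac{s}{2}$ playing the role of the paper's $\epsilon$. The one point to tighten is the deviation bound for $A$: when $a$ equals $y_j$ or $y_j+s$ the open window $(a-s,a)$ contains no atom, but the two endpoint ties each contribute an extra $\tfrac14$, so the excess over the all-$\tfrac12$ baseline is still at most $\tfrac12$ and the bound $\tfrac{k+2}{2k+2}$ survives --- your stated justification (``at most one atom in the open window'') covers only the tie-free case, and note also that feasibility of the atoms ($x_k\le\tfrac12$, $y_k\le\tfrac t2$) actually comes from $u\ge 0$, i.e.\ $t\ge\tfrac{2k}{2k+1}$, rather than from $u<s$.
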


\begin{figure}[thb]

  \centering
  \begin{tikzpicture}
    \begin{axis}[
      xlabel = {$t$},
      ylabel = {$W_2(t)$},
      ymin = 0.45,
      ytick = {0.5,0.6,...,1.1}
      ]
      \pgfplotsinvokeforeach {0,1,...,8}{
      \addplot[domain={(2*#1)/((2*#1+1))}:{(2*#1+2)/(2*#1+3)},black] {(#1+2)/(2*#1+2)};
      \draw (axis cs:{(2*#1+2)/(2*#1+3)},{(#1+2)/(2*#1+2)}) circle [radius = 1pt];
      \draw[dotted] (axis cs:{(2*#1+2)/(2*#1+3)},{(#1+2)/(2*#1+2)}) -- (axis cs:{(2*#1+2)/(2*#1+3)},{(#1+3)/(2*#1+4)});
      \draw[fill=black] (axis cs:{(2*#1+2)/(2*#1+3)},{(#1+3)/(2*#1+4)}) circle (1pt);
      }
      \draw[fill=black] (axis cs:1,0.5) circle (1pt);
    \end{axis}
  \end{tikzpicture}
  \caption{$W_2(t)$, the payoff for Asymmetric Colonel Blotto Game with $2$ battlefields}
  \label{fig:w2t}
\end{figure}

The proof for $W_2(t)$ and constructions of Nash equilibriums can be found later in this section. Before we go on to prove this, let us first prove a lemma regarding the Asymmetric Colonel Blotto game with two battlefields:
\begin{lemma}
  \label{lem:pureAgainstPureNEq2}
  Suppose that $X_A>X_B$. If player $A$ deploys the pure strategy $(a,X_A-a)$ and $B$ deploys the pure strategy $(b,X_B-b)$, then 
  $$W_A = \begin{cases}
    \frac{1}{2}           & a-b<0 $ or $ a-b>X_A-X_B \\
    \frac{3}{4} & a-b=0 $ or $ a-b=X_A-X_B \\
    1           & 0<a-b<X_A-X_B            \end{cases}
  $$
  where $W_A$ is the payoff for player $A$.
\end{lemma}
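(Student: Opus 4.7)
The plan is to observe that the game with two battlefields is purely a matter of comparing allocations on each of the two battlefields, then enumerate the cases. The quantity $d = a - b$ determines both comparisons: on battlefield $1$, player $A$'s allocation exceeds $B$'s iff $d > 0$, and on battlefield $2$, player $A$'s allocation $X_A - a$ exceeds $B$'s allocation $X_B - b$ iff $d < X_A - X_B$. Since $X_A > X_B$, these two conditions can both hold (when $0 < d < X_A - X_B$), and the cases partition the real line according to the position of $d$ relative to $0$ and $X_A - X_B$.

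First, I would record the per-battlefield payoff rule: a won battlefield contributes $1/2$ to the payoff (since $n = 2$), a tied battlefield contributes $1/(2n) = 1/4$, and a lost battlefield contributes $0$. Then I would introduce the shorthand $d = a - b$ and state the two equivalences above, so that the status of each battlefield for player $A$ is read off from the sign of $d$ and the sign of $(X_A - X_B) - d$ respectively.

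Next, I would split into the five cases listed in the statement and compute $W_A$ in each:
\begin{itemize}
\item if $d < 0$, then $A$ loses battlefield $1$ and, since $d < 0 < X_A - X_B$, wins battlefield $2$, giving $W_A = 1/2$;
\item if $d > X_A - X_B$, then $d > 0$ so $A$ wins battlefield $1$ and loses battlefield $2$, again giving $W_A = 1/2$;
\item if $d = 0$, then battlefield $1$ is tied and $A$ wins battlefield $2$, giving $W_A = 1/4 + 1/2 = 3/4$;
\item if $d = X_A - X_B$, then $A$ wins battlefield $1$ (because $X_A - X_B > 0$) and battlefield $2$ is tied, giving $W_A = 1/2 + 1/4 = 3/4$;
\item if $0 < d < X_A - X_B$, then $A$ wins both battlefields, giving $W_A = 1$.
\end{itemize}
Each case is a one-line verification, so no hard estimation is required.

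There is essentially no obstacle here: the only thing to be careful about is that $X_A > X_B$ is used to guarantee that the interval $(0, X_A - X_B)$ is nonempty and that the boundary cases $d = 0$ and $d = X_A - X_B$ are distinct, so that the five cases really cover all possibilities and do not overlap. The nondecreasing constraints $a \le X_A/2$ and $b \le X_B/2$ are not needed for the statement itself; they just ensure that the pure strategies in question are feasible, which is assumed in the hypothesis.
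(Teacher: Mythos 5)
Your proposal is correct and follows essentially the same route as the paper: both arguments reduce everything to the sign of $a-b$ and of $(X_A-X_B)-(a-b)$ and then enumerate the same five cases with the per-battlefield payoffs $\frac{1}{2}$, $\frac{1}{4}$, $0$. The only difference is presentational (your explicit introduction of $d=a-b$ and the remark on why the cases partition the line), which adds clarity but no new content.
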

\begin{proof}
  \leavevmode
  \begin{itemize}
    \item If $a-b<0$, then $X_A-a>X_B-b$, so $W_A=\frac{1}{2}$.
    \item If $a-b=0$, then $X_A-a>X_B-b$, so $W_A=\frac{3}{4}$.
    \item If $0<a-b<X_A-X_B$, then $X_A-a>X_B-b$, so $W_A=1$.
    \item If $a-b=X_A-X_B>0$, then $W_A=\frac{3}{4}$.
    \item If $a-b>X_A-X_B$, then $X_A-a<X_B-b$, so $W_A=\frac{1}{2}$.
  \end{itemize}
  Hence,
  $$W_A = \begin{cases}
    \frac{1}{2}           & a-b<0 $ or $ a-b>X_A-X_B \\
    \frac{3}{4} & a-b=0 $ or $ a-b=X_A-X_B \\
    1           & 0<a-b<X_A-X_B. \end{cases}
  $$
\end{proof}

With the help of \Cref{lem:pureAgainstPureNEq2}, we can prove \Cref{thm:w2t}:
\begin{proof}[Proof of~\Cref{thm:w2t}]
  \leavevmode
  \begin{enumerate}
    \item Suppose that $t<\frac{2}{3}$.
      In this case player $A$ can simply overwhelm player $B$ in all the battlefields. 
      Take $P_A=\left(\left(\frac{1}{3},\frac{2}{3}\right),1\right)$ and $P_B$ to be any strategy. $P_A$ and $P_B$ form a Nash equilibrium and $W_2(t) = 1$.

      Given any pure strategy $(x,t-x)$ of $B$, we must have $x\leq t-x$, so $x\leq \frac{t}{2}$, $\frac{1}{3}>\frac{t}{2}\geq x$, and $\frac{2}{3}>t\geq t-x$. 
      Thus in this case, the payoff to $B$ is $0$. This means that $B$ cannot increase payoff regardless of the strategy (or mixed strategy) chosen. On the other hand, the payoff of $A$ is $1$, which is the maximum possible value, so clearly neither can $A$ increase payoff by changing strategy. Hence,  $P_A=\left(\left(\frac{1}{3},\frac{2}{3}\right),1\right)$ and any $P_B$ form a Nash equilibrium and $W_2(t) = 1$.
    \item Suppose $k$ is such that $\frac{2k}{2k+1}\leq t< \frac{2k+2}{2k+3}$, where $k\in\mathbb{Z}^+$.
      Take
      $$P_A=\left\{\left(\left(\epsilon+j(1-t),1-\epsilon-j(1-t)\right),\frac{1}{k+1}\right)\,\middle|\,0\leq j\leq k\right\}$$
      and 
      $$P_B=\left\{\left(\left(j(1-t),t-j(1-t)\right),\frac{1}{k+1}\right)\,\middle|\,0\leq j\leq k\right\},$$
      where $\epsilon$ is such that
      \begin{equation}
        \label{eq:w2tEpsilon}
        \frac{2k+1}{2} t-k<\epsilon<\min\left(1-t,t k-k+\frac{1}{2}\right).
      \end{equation}
      The notation here just means that player $A$ plays pure strategy 
      \[\left(\epsilon+j(1-t),1-\epsilon-j(1-t)\right)\] 
      with probability $\frac{1}{k+1}$ for all $j$ such that $0\leq j\leq k$; and player $B$ plays pure strategy
      \[\left(j(1-t),t-j(1-t)\right)\]
      with probability $\frac{1}{k+1}$ for all $j$ such that $0\leq j\leq k$.
      Then we claim that $P_A$ and $P_B$ form a Nash equilibrium and $W_2(t)=\frac{k+2}{2k+2}$.

      First we will show that these mixed strategies are legitimate.
      If $t<\frac{2k+2}{2k+3}$, then $\frac{2k+3}{2} t<k+1$, so $\frac{2k+1}{2} t-k<1-t$.
      Now $\frac{t}{2}<\frac{1}{2}$, so $\frac{2k+1}{2}t-k<tk-k+\frac{1}{2}$,
      which in turn implies that 
      $$\frac{2k+1}{2}\cdot t-k\geq k-k=0.$$
      Hence, a positive $\epsilon$ satisfying \cref{eq:w2tEpsilon} exists. 
      Further, we need to check that the level of force distributed on the first battlefield, $x_1$, is less than or equal to the force distributed on the second battlefield, $x_2$; or, equivalently, for player $i$, we need to check that $x_1\le \frac{X_i}{2}$.
      First, let's check player $A$'s strategy. Since $j\le k$, we must have
      \[\epsilon+j(1-t)\leq \epsilon+k(1-t).\]
      Then we plug in the upper bound of $\epsilon$ in \cref{eq:w2tEpsilon} and get
      \[\epsilon+k(1-t)<t\cdot k-k+\frac{1}{2}+k(1-t)=\frac{1}{2}.\]
      So $\epsilon+j(1-t)<\frac{1}{2}$.
      Now let's check player $B$'s strategy.
      We already know that $t\ge\frac{2k}{2k+1}$, rearrange and we would get
      \[ k(1-t)\leq\frac{t}{2} .\]
      Since $ j(1-t)\leq k(1-t)$, we must have
      \[j(1-t)\leq\frac{t}{2} .\] 
      So both $P_A$ and $P_B$ are legitimate mixed strategies.

      Suppose $A$ chooses some pure strategy $p'_A=(x,1-x)$. Set $a=\left\lfloor\frac{x}{1-t}\right\rfloor$.
      Hence, $$(1-t)a\leq x<(1-t)(a+1)$$
      where $0\leq a \leq k+1$.
      Now let us expand $P_B$ into pure strategies in the calculation of $W_A\left( p'_A, P_B \right)$:
      \begin{align*}
        (k+1)W_A(p'_A,P_B)&=\sum_{j=0}^kW_A\left((x,1-x),\left(j(1-t),t-j(1-t)\right)\right)\\
        &\leq \sum_{j=0}^{a-2}W_A\left((x,1-x),\left(j(1-t),t-j(1-t)\right)\right)\\
        &\qquad +\sum_{j=a+1}^kW_A\left((x,1-x),\left(j(1-t),t-j(1-t)\right)\right)\\
        &\qquad +W_A\left((x,1-x),\left((a-1)(1-t),t-(a-1)(1-t)\right)\right)\\
        &\qquad +W_A\left((x,1-x),\left(a(1-t),t-a(1-t)\right)\right).
      \end{align*}
      There is a $\leq$ sign on the second line since if $a=k+1$, there is one additional non-negative term on the right, 
      $W_A\left((x,1-x),\left((k+1)(1-t),t-(k+1)(1-t)\right)\right)$, compared with the original formula.

      First let us consider the sum \[\sum_{j=0}^{a-2}W_A\left((x,1-x),\left(j(1-t),t-j(1-t)\right)\right).\] Here,
      \begin{align*}
        x-j(1-t)&\geq x-(a-2)(1-t)\\
        &\geq a(1-t)-(a-2)(1-t)\\
        &>(1-t).
      \end{align*}
      Hence, according to \Cref{lem:pureAgainstPureNEq2}, \[W_A\left((x,1-x),\left(j(1-t),t-j(1-t)\right)\right)=\frac{1}{2}\] if $0\leq j\leq a-2$. Thus, 
      $$\sum_{j=0}^{a-2}W_A\left((x,1-x),\left(j(1-t),t-j(1-t)\right)\right)=\frac{a-1}{2}.$$   

      Then let us consider the sum \[\sum_{j=a+1}^kW_A\left((x,1-x),\left(j(1-t),t-j(1-t)\right)\right).\] Here,
      \begin{align*}
        x-j(1-t)&\leq x-(a+1)(1-t)\\
        &< (a+1)(1-t)-(a+1)(1-t)\\
        &<0.
      \end{align*}	
      Hence, according to \Cref{lem:pureAgainstPureNEq2}, \[W_A\left((x,1-x),\left(j(1-t),t-j(1-t)\right)\right)=\frac{1}{2}\] if $a+1\leq j\leq k$. Thus,
      $$\sum_{j=a+1}^kW_A\left((x,1-x),\left(j(1-t),t-j(1-t)\right)\right)=\frac{k-a}{2}.$$

      If $x=a(1-t)$, then according to \Cref{lem:pureAgainstPureNEq2},
      \begin{multline*}
        W_A\left((x,1-x),\left((a-1)(1-t),t-(a-1)(1-t)\right)\right)\\
        +W_A\left((x,1-x),\left(a(1-t),t-a(1-t)\right)\right)
        =\frac{3}{4}+\frac{3}{4}=\frac{3}{2}.
      \end{multline*}

      If $x>a(1-t)$, then according to \Cref{lem:pureAgainstPureNEq2},
      \begin{multline*}
        W_A\left((x,1-x),\left((a-1)(1-t),t-(a-1)(1-t)\right)\right)\\
        +W_A\left((x,1-x),\left(a(1-t),t-a(1-t)\right)\right)=\frac{1}{2}+1=\frac{3}{2}.
      \end{multline*}

      Hence, $W_A(p'_A,P_B)\leq \frac{1}{k+1} \cdot \frac{k+2}{2} = \frac{k+2}{2k+2}$.
      So $A$ cannot increase payoff above $\frac{k+2}{2k+2}$ by changing strategy.
      Now let's consider player $B$'s strategy.
      Suppose $B$ chooses some pure strategy $p'_B=(y,t-y)$. Set $b=\lceil\frac{y-\epsilon}{1-t}\rceil$.
      Hence, $$(1-t)(b-1)+\epsilon< y\leq (1-t)b+\epsilon$$
      where $0\leq b\leq k$.

      \begin{enumerate}
        \item
          In the case where $0\leq b\leq k-1$,
          let's expand $P_A$ into pure strategies in the calculations of $W_A\left( P_A, p'_B \right)$:
          \begin{align*}
            (k+1)W_A(P_A,p'_B)&=\sum_{j=0}^kW_A\left((j(1-t)+\epsilon,1-j(1-t)-\epsilon),\left(y,t-y\right)\right)\\
            &= \sum_{j=0}^{b-1}W_A\left((j(1-t)+\epsilon,1-j(1-t)-\epsilon),\left(y,t-y\right)\right)\\
            &\qquad +\sum_{j=b+2}^kW_A\left((j(1-t)+\epsilon,1-j(1-t)-\epsilon),\left(y,t-y\right)\right)\\
            &\qquad +W_A\left((b(1-t)+\epsilon,1-b(1-t)-\epsilon),\left(y,t-y\right)\right)\\
            &\qquad +W_A\left(((b+1)(1-t)+\epsilon,1-(b+1)(1-t)-\epsilon),\left(y,t-y\right)\right).
          \end{align*}

          First let us consider the sum \[\sum_{j=0}^{b-1}W_A\left((j(1-t)+\epsilon,1-j(1-t)-\epsilon),\left(y,t-y\right)\right).\] Here,
          \[
            j(1-t)+\epsilon\leq (b-1)(1-t)+\epsilon
            <y.
          \]
          Hence, according to \Cref{lem:pureAgainstPureNEq2}, \[W_A\left((j(1-t)+\epsilon,1-j(1-t)-\epsilon),\left(y,t-y\right)\right)=\frac{1}{2}\] if $0\leq j\leq b-1$. Thus,
          $$\sum_{j=0}^{b-1}W_A\left((j(1-t)+\epsilon,1-j(1-t)-\epsilon),\left(y,t-y\right)\right)=\frac{b}{2}.$$   

          Then let us consider the sum \[\sum_{j=b+2}^kW_A\left((j(1-t)+\epsilon,1-j(1-t)-\epsilon),\left(y,t-y\right)\right).\] Here,
          \begin{align*}
            j(1-t)+\epsilon-y&\geq (b+2)(1-t)+\epsilon-y\\
            &\geq (b+2)(1-t)+\epsilon-b(1-t)-\epsilon\\
            &>1-t.
          \end{align*}	
          Hence, according to \Cref{lem:pureAgainstPureNEq2}, \[W_A\left((j(1-t)+\epsilon,1-j(1-t)-\epsilon),\left(y,t-y\right)\right)=\frac{1}{2}\] if $b+2\leq j\leq k$. Thus,
          $$\sum_{j=b+2}^kW_A\left((j(1-t)+\epsilon,1-j(1-t)-\epsilon),\left(y,t-y\right)\right)=\frac{k-b-1}{2}.$$

          If $y=b(1-t)+\epsilon$, then according to \Cref{lem:pureAgainstPureNEq2},
          \begin{multline*}
            W_A\left((b(1-t)+\epsilon,1-b(1-t)-\epsilon),\left(y,t-y\right)\right)\\
            +W_A\left(((b+1)(1-t)+\epsilon,1-(b+1)(1-t)-\epsilon),\left(y,t-y\right)\right)
            =\frac{3}{4}+\frac{3}{4} = \frac{3}{2}.
          \end{multline*}

          If $y<b(1-t)+\epsilon$, then according to \Cref{lem:pureAgainstPureNEq2},
          \begin{multline*}
            W_A\left((b(1-t)+\epsilon,1-b(1-t)-\epsilon),\left(y,t-y\right)\right)\\
            +W_A\left(((b+1)(1-t)+\epsilon,1-(b+1)(1-t)-\epsilon),\left(y,t-y\right)\right)
            = 1+ \frac{1}{2} = \frac{3}{2}.
          \end{multline*}
          Therefore, in either case, $W_A(P_A,p'_B) = \frac{k+2}{2k+2}$.
        \item
          If $b=k$, then
          \[
            k(1-t)+\epsilon > k(1-t)+\frac{2k+1}{2}t-k
            =\frac{1}{2}t
            \geq y.
          \]
          So
          \begin{align*}
            (k+1)W_A(P_A,p'_B)&=\sum_{j=0}^kW_A\left((j(1-t)+\epsilon,1-j(1-t)-\epsilon),\left(y,t-y\right)\right)\\
            &= \sum_{j=0}^{k-1}W_A\left((j(1-t)+\epsilon,1-j(1-t)-\epsilon),\left(y,t-y\right)\right)\\
            &\qquad+W_A\left((k(1-t)+\epsilon,1-k(1-t)-\epsilon),\left(y,t-y\right)\right).
          \end{align*}

          Similarly,\[\sum_{j=0}^{k-1}W_A\left((j(1-t)+\epsilon,1-j(1-t)-\epsilon),\left(y,t-y\right)\right)=\frac{k}{2}.\] 

          And since $(1-t)(k-1)+\epsilon< y< (1-t)k+\epsilon$,
          $$W_A\left((k(1-t)+\epsilon,1-k(1-t)-\epsilon),\left(y,t-y\right)\right)=1.$$
      \end{enumerate}
      Hence, $W_A(P_A,p'_B)\geq \frac{k+2}{2k+2}$, which means that $B$ cannot increase payoff above $\frac{k}{2k+2}$ by changing strategy.
      Hence, $P_A$ and $P_B$ form a Nash equilibrium, and the equilibrium payoff for $A$ is $W_2(t)=\frac{k+2}{2k+2}$.

    \item Finally, suppose that $t=1$. Take any mixed strategy $P_A$ and any mixed strategy $P_B$. Then they form a Nash equilibrium with $W_2(t)=\frac{1}{2}$.
      To see this, suppose $A$ plays the pure strategy $P'_A=(a,1-a)$ and $B$ plays the pure strategy $P'_B=(b,1-b)$.
      \begin{itemize}
        \item       If $a=b$, clearly $W_A(P'_A,P'_B)=\frac{1}{2}$.
        \item If $a<b$, then $1-a>1-b$, so $W_A(P'_A,P'_B)=\frac{1}{2}$. Similarly, if $a>b$, $W_A(P'_A,P'_B)=\frac{1}{2}$.
      \end{itemize}
      Hence, the payoff is $\frac{1}{2}$ regardless of the pure strategies that both players play. As a result, the payoff is also $\frac{1}{2}$ regardless of what mixed strategies that the two players play.
  \end{enumerate}
\end{proof} 	      	      	      	      	      

\section{Unique equilibrium payoffs of the game $ACB(X_A,X_B,3)$}
\label{sec:w3t}
In this section we find the unique equilibrium payoffs of some cases of the Asymmetric Colonel Blotto game involving three battlefields. The results that follow are ordered by ascending values of $t$.

Suppose without loss of generality that $X_A=1$ and $X_B=t\leq1$. The case where $t=1$ is already solved in \Cref{sec:uniqueCDFACB113}, and we have $W_3(1) = \frac{1}{2}$. In the following discussions, let the function $s(x)$ be defined as follows:
\[
s(x) = 
\begin{cases}
  0 & x<0 \\
  \frac{1}{2} & x = 0\\
  1 & x>0.
\end{cases}
\]

\begin{theorem}
  \label{thm:w3t6/11}
  In the case where $t<\frac{6}{11}$, $w_3(t) = 1$.
\end{theorem}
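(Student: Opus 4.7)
The plan is to exhibit an explicit pure strategy for player $A$ that strictly dominates every feasible pure strategy of player $B$ on all three battlefields, forcing $A$'s payoff to be $1$ (the maximum possible) and thereby yielding a Nash equilibrium regardless of what mixed strategy $B$ chooses. This parallels the approach used in \Cref{thm:w2t} for the case $t < 2/3$.

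The key observation is that because $B$'s allocation $(b_1,b_2,b_3)$ must be nondecreasing with sum $t$, we get the pointwise upper bounds $b_1 \le t/3$, $b_2 \le t/2$, and $b_3 \le t$. Indeed, $b_1$ being the smallest of three numbers summing to $t$ forces $b_1\le t/3$; similarly $b_2 \le (b_2+b_3)/2 \le (t-b_1)/2 \le t/2$; and $b_3 \le t$ is trivial. These three bounds are not attained simultaneously by a single $B$-strategy, but each can be approached individually, so any pure strategy $(a_1,a_2,a_3)$ of $A$ that beats \emph{every} pure strategy of $B$ on \emph{every} battlefield must satisfy $a_1 > t/3$, $a_2 > t/2$, and $a_3 > t$.

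Summing these inequalities gives $a_1+a_2+a_3 > 11t/6$, so such a choice is possible within $A$'s budget of $1$ precisely when $t < 6/11$. For concreteness I would pick, for any $t < 6/11$,
\[
a_1 = \tfrac{t}{3} + \delta,\qquad a_2 = \tfrac{t}{2}+\delta,\qquad a_3 = t + \bigl(1 - \tfrac{11t}{6} - 2\delta\bigr),
\]
with $\delta>0$ small enough that the excess $1 - 11t/6 - 2\delta$ remains positive. Then $a_1 \le a_2 \le a_3$ (because $t/3 < t/2 < t$) and $a_1+a_2+a_3=1$, so $(a_1,a_2,a_3)\in\mathfrak{B}_A$.

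To finish, I would verify that for every $(b_1,b_2,b_3)\in\mathfrak{B}_B$ the strict inequalities $a_k > b_k$ hold for $k=1,2,3$, which follows immediately from the bounds above. Hence $A$'s pure strategy wins all three battlefields against every pure strategy of $B$, and therefore against every mixed strategy of $B$, giving $A$ payoff $1$. Since $1$ is the maximum attainable payoff, $A$ has no profitable deviation; and since $B$'s payoff is $0$ against every pure $B$-strategy, $B$ likewise cannot improve by any deviation. Thus $A$'s pure strategy paired with an arbitrary $B$-strategy forms a Nash equilibrium, and by \Cref{thm:uniqueNEPayoff} we conclude $W_3(t) = 1$. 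There is no significant obstacle here; the only substantive content is identifying the threshold $6/11$ as $1/(1/3+1/2+1)^{-1}$ arising from the nondecreasing constraint on $B$.
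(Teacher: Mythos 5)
Your proposal is correct and follows essentially the same route as the paper: player $A$ overwhelms $B$ on every battlefield using the bounds $b_1\le t/3$, $b_2\le t/2$, $b_3\le t$ forced by the nondecreasing constraint, the only difference being that the paper uses the fixed strategy $\left(\frac{2}{11},\frac{3}{11},\frac{6}{11}\right)$ rather than a $t$-dependent one. (The paper's proof is in fact terser than yours, omitting the verification you spell out.)
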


\begin{proof}
  In this case, player $A$ can simply overwhelm player $B$ in all the battlefields. 

  Take $P_A=\{\left((\frac{2}{11},\frac{3}{11},\frac{6}{11}),1\right)\}$ and $P_B$ to be any strategy. $P_A$ and $P_B$ form a Nash equilibrium and $W_3(t) = 1$.
\end{proof}

\begin{theorem}
  \label{thm:w3t6/11_18/31}
  In the case where $\frac{6}{11}\leq t<\frac{18}{31}$,
  take 
  \begin{multline*}
    P_A=\left\{
    \left(\left(\frac{t}{3}+\epsilon,\frac{t}{2}+\epsilon,1-\frac{5}{6}t-2\epsilon\right),\frac{1}{3}\right), \right.
    \left(\left(\frac{t}{3}+\epsilon,1-\frac{4}{3}t-2\epsilon,t+\epsilon\right),\frac{1}{3}\right),\\
    \left. \left(\left(1-\frac{3}{2}t-2\epsilon,\frac{t}{2}+\epsilon,t+\epsilon\right),\frac{1}{3}\right) \right\}
  \end{multline*}
  and
  $$P_B=\left\{
  \left(\left(0,0,t\right),\frac{1}{3}\right),
  \left(\left(0,\frac{t}{2},\frac{t}{2}\right),\frac{1}{3}\right),
  \left(\left(\frac{t}{3},\frac{t}{3},\frac{t}{3}\right),\frac{1}{3}\right)
  \right\}$$
  where $0<\epsilon<\frac{1}{2}\left(1-\frac{31}{18}t\right)$.

  $P_A$ and $P_B$ form a Nash equilibrium and $W_3(t)=\frac{8}{9}$.
\end{theorem}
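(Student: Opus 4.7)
The plan is to exhibit the displayed $P_A$ and $P_B$ as a Nash equilibrium of $ACB(1,t,3)$; the payoff claim $W_3(t)=\frac{8}{9}$ then follows from \Cref{thm:uniqueNEPayoff}. The verification splits naturally into three parts: (i) both strategies are feasible, (ii) every atom in either support yields the claimed expected payoff against the opponent's mixed strategy, and (iii) no pure deviation by either player is strictly profitable.

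For (i), I would check that every atom of $P_A$ lies in $\mathfrak{B}_A$ (componentwise nondecreasing and summing to $1$) and likewise for $P_B$. The hypotheses $\frac{6}{11}\le t<\frac{18}{31}$ and $0<\epsilon<\frac{1}{2}\bigl(1-\frac{31}{18}t\bigr)$ yield exactly the required inequalities; for instance, $A$'s second atom $(t/3+\epsilon,\,1-4t/3-2\epsilon,\,t+\epsilon)$ is nondecreasing iff $t\le \frac{3}{5}-\frac{9}{5}\epsilon$, which is implied by the stated bounds since $\frac{18}{31}<\frac{3}{5}$.

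For (ii), I would tabulate the nine head-to-head comparisons between atoms of $P_A$ and atoms of $P_B$. In each cell the three battlefield comparisons reduce to explicit linear inequalities in $t$ and $\epsilon$ that are easily verified on $[\frac{6}{11},\frac{18}{31})$ using only the hypotheses above. Averaging then confirms that each atom of $P_A$ yields $A$ expected payoff $\frac{8}{9}$ against $P_B$ and each atom of $P_B$ yields $B$ expected payoff $\frac{1}{9}$ against $P_A$ (consistent with the constant-sum structure). The role of $\epsilon$ is precisely to break the otherwise-tied comparisons (where $A$ would match one of $B$'s atom coordinates exactly) in $A$'s favor.

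The main obstacle is (iii). For a pure deviation $(a,b,c)\in\mathfrak{B}_A$, I would partition $\mathfrak{B}_A$ by the pattern of wins, losses, and ties against each atom of $P_B$, and then, using the constraint $a+b+c=1$ together with an argument in the spirit of \Cref{lem:pureAgainstPureNEq2}, show that at most $\frac{8}{9}$ is attainable in every cell. Symmetrically, for $(x,y,z)\in\mathfrak{B}_B$ with $x+y+z=t$, the upper bound $\frac{1}{2}\bigl(1-\frac{31}{18}t\bigr)$ on $\epsilon$ is chosen precisely so that $B$ cannot slip a deviation into any gap between two of $A$'s perturbed atoms while still winning an extra battlefield. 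This casework is conceptually routine but combinatorially heavy, and is where the thresholds $\frac{6}{11}$ and $\frac{18}{31}$ demonstrate their necessity: at $t=\frac{18}{31}$ the feasible window for $\epsilon$ collapses to a point, signaling that a different equilibrium structure must take over beyond that value.
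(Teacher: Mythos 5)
Your proposal takes the same route as the paper: verify feasibility, then check directly that neither player has a profitable pure deviation. Parts (i) and (ii) are correct, and (ii) is in fact redundant: once one shows $W_A(p'_A,P_B)\le\frac{8}{9}$ for every pure $p'_A$ and $W_A(P_A,p'_B)\ge\frac{8}{9}$ for every pure $p'_B$, averaging over the atoms of each support yields $W_A(P_A,P_B)=\frac{8}{9}$ and deviation-proofness simultaneously, which is how the paper concludes. The substantive issue is step (iii), which you defer as ``combinatorially heavy'' casework; that deferral is exactly where your write-up stops short of a proof, and the paper shows the heavy casework is avoidable. For $A$'s deviations, write $9W_A((a,b,c),P_B)$ as a sum of nine terms $s(\cdot)\in\{0,\tfrac12,1\}$; a payoff exceeding $\frac{8}{9}$ forces every term to be positive, hence $a\ge\frac{t}{3}$, $b\ge\frac{t}{2}$, $c\ge t$, so $1=a+b+c\ge\frac{11}{6}t$, contradicting $t\ge\frac{6}{11}$ except in the boundary case $t=\frac{6}{11}$, where the unique candidate $\left(\frac{t}{3},\frac{t}{2},t\right)$ earns only $\frac{5}{6}$ because of ties. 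For $B$'s deviations $(d,e,f)$, feasibility gives $d\le\frac{t}{3}$, $e\le\frac{t}{2}$, $f\le t$, so six of the nine comparisons are automatic wins for $A$, and a short cyclic implication (this is where the upper bound on $\epsilon$ enters) shows at most one of the remaining three can go $B$'s way, giving $W_A\ge\frac{8}{9}$. One small correction to your part (i): the monotonicity of $A$'s second atom is not implied by $\frac{18}{31}<\frac{3}{5}$ alone; you need the lower bound $t\ge\frac{6}{11}$ jointly with $\epsilon<\frac12\left(1-\frac{31}{18}t\right)$, and the inequality becomes tight as $t\to\frac{6}{11}$ with $\epsilon$ near its supremum.
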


\begin{proof}

  Since $t<\frac{18}{31}$, a real $\epsilon$ satisfying the necessary condition must exist. From the range of $t$ and $\epsilon$, we can check that the strategies of $A$ and $B$ are legitimate, or in other words, the levels of force of the battlefields are nondecreasing.

  Suppose $A$ plays the pure strategy $p'_A=(a,b,c)$ where $a+b+c=1$. Then,
  \begin{align*}
    9W_A\left(p'_A,P_B\right)&=3W_A\left(\left(a,b,c\right),\left(0,0,t\right)\right)\\
    &+3W_A\left(\left(a,b,c\right),\left(0,\frac{t}{2},\frac{t}{2}\right)\right)\\
    &+3W_A\left(\left(a,b,c\right),\left(\frac{t}{3},\frac{t}{3},\frac{t}{3}\right)\right)\\
    &=2s\left(a-0\right)+s\left(a-\frac{t}{3}\right)\\
    &+s\left(b-0\right)+s\left(b-\frac{t}{2}\right)+s\left(b-\frac{t}{3}\right)\\
    &+s\left(c-\frac{t}{3}\right)+s\left(c-\frac{t}{2}\right)+s\left(c-\frac{t}{3}\right).
  \end{align*}
  For $W_A(p'_A,P_B)$ to be more than $\frac{8}{9}$, none of the terms on the right can be $0$. Hence, we must have
  \[
    a\geq\frac{t}{3},\qquad
    b\geq\frac{t}{2},\qquad
    c\geq t.\]
  So $$1=a+b+c\geq\frac{11}{6}t.$$
  This is only possible when $t=\frac{11}{6}$. In this case, $p'_A$ must be $(\frac{t}{3},\frac{t}{2},t)$, so $W_A(p'_A,P_B)=\frac{5}{6}<\frac{8}{9}$.
  Hence, $W_A(p'_A,P_B)\leq\frac{8}{9}$ for all pure strategies $p'_A$.

  Suppose $B$ plays the pure strategy $p'_B=(d,e,f)$ where $d+e+f=t$. Then,
  \begin{align*}
    9W_A\left(P_A,p'_B\right)&=3W_A\left(\left(\frac{t}{3}+\epsilon,\frac{t}{2}+\epsilon,1-\frac{5}{6}t-2\epsilon\right),\left(d,e,f\right)\right)\\
    &+3W_A\left(\left(\frac{t}{3}+\epsilon,1-\frac{4}{3}t-2\epsilon,t+\epsilon\right),\left(d,e,f\right)\right)\\
    &+3W_A\left(\left(1-\frac{3}{2}t-2\epsilon,\frac{t}{2}+\epsilon,t+\epsilon\right),\left(d,e,f\right)\right).
  \end{align*}
  Remember that $d\leq\frac{t}{3}$, $e\leq\frac{t}{2}$, and $f\leq t$. So,
  \[
  9W_A\left(P_A,p'_B\right)=6+s\left(1-\frac{3}{2}t-2\epsilon-d\right)+s\left(1-\frac{4}{3}t-2\epsilon-e\right)+s\left(1-\frac{5}{6}t-2\epsilon-f\right).
  \]
  From $\frac{6}{11}\leq t<\frac{18}{31}$ and $0<\epsilon<\frac{1}{2}\left(1-\frac{31}{18}t\right)$, we can show that
  $$d\geq 1-\frac{3}{2}t-2\epsilon\Rightarrow e\leq\frac{t-d}{2}\leq\frac{5}{4}t+\epsilon-\frac{1}{2}<1-\frac{4}{3}t-2\epsilon,$$
  $$e\geq 1-\frac{4}{3}t-2\epsilon\Rightarrow f\leq t-e\leq \frac{7}{3}t-1+2\epsilon<1-\frac{5}{6}t-2\epsilon,$$
  $$\text{and } f\geq 1-\frac{5}{6}t-2\epsilon\Rightarrow d\leq t-2f\leq \frac{8}{3}t-2+4\epsilon<1-\frac{3}{2}t-2\epsilon.$$
  Hence, at least $2$ terms on the right must be $1$.  So $W_A(P_A,p'_B)\geq\frac{8}{9}$ for all pure strategies $p'_B$.

  To conclude, player $A$ cannot increase payoff above $\frac{8}{9}$ by changing  strategy, 
  and player $B$ can also not increase payoff above $\frac{1}{9}$ by changing strategy. 
  So $P_A,P_B$ form a Nash equilibrium and the equilibrium payoff $W_3(t)=\frac{8}{9}$.
\end{proof}

\begin{theorem}
  \label{thm:w3t3/5_30/47}
  When $\frac{3}{5}<t<\frac{30}{47}$,    	
  take	
  $$P_A=\left\{\left(\left(\frac{30-22t}{75},\frac{15-11t}{25},\frac{11t}{15}\right),\frac{1}{2}\right),\left(\left(\frac{2t}{15},\frac{13t}{30},1-\frac{17t}{30}\right),\frac{1}{2}\right)\right\}$$ 
  and 
  $$P_B=\left\{\left(\left(\frac{t}{3},\frac{t}{3},\frac{t}{3}\right),\frac{1}{2}\right),\left(\left(0,0,t\right),\frac{1}{2}\right)\right\}.$$
Then $P_A$ and $P_B$ form a Nash equilibrium and $W_3(t)=\frac{5}{6}$.
\end{theorem}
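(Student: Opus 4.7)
The plan is to verify that $P_A$ and $P_B$ are admissible mixed strategies, and then show that neither player can improve their payoff by deviating to a pure strategy. Admissibility reduces to checking that the coordinates of each pure strategy in the support are nonnegative, nondecreasing, and sum to $1$ (for $A$) or $t$ (for $B$); the monotonicity constraints give a handful of scalar inequalities in $t$, the tightest of which, $t\ge\tfrac{45}{88}$, is implied by $t>\tfrac{3}{5}$.

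For an arbitrary pure deviation $p_A'=(a,b,c)$ with $a\le b\le c$ and $a+b+c=1$, expanding against the two components of $P_B$ yields
\[
6W_A(p_A',P_B)=s(a)+s\!\left(a-\tfrac{t}{3}\right)+s(b)+s\!\left(b-\tfrac{t}{3}\right)+s\!\left(c-\tfrac{t}{3}\right)+s(c-t),
\]
and $c\ge\tfrac{1}{3}>\tfrac{t}{3}$ makes the fifth summand equal to $1$. Setting $g(x)=s(x)+s(x-\tfrac{t}{3})$, the bound $W_A(p_A',P_B)\le\tfrac{5}{6}$ becomes $g(a)+g(b)+s(c-t)\le 4$. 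Since $g$ on $[0,\infty)$ takes only the values $\tfrac12,1,\tfrac32,2$ (attained at $x=0$, $0<x<\tfrac{t}{3}$, $x=\tfrac{t}{3}$, $x>\tfrac{t}{3}$ respectively), any configuration with $g(a)+g(b)+s(c-t)>4$ must satisfy either $a\ge\tfrac{t}{3}$, $b>\tfrac{t}{3}$, $c\ge t$ or $a,b>\tfrac{t}{3}$ with $c=t$; in the first case $a+b+c>\tfrac{5t}{3}>1$ and in the second $a+b>\tfrac{2t}{3}>1-t$, both contradicting $a+b+c=1$ (using $t>\tfrac{3}{5}$).

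For an arbitrary pure deviation $p_B'=(d,e,f)$ with $d\le e\le f$ and $d+e+f=t$, set
\[
L_1=\tfrac{2t}{15},\ L_2=\tfrac{13t}{30},\ L_3=\tfrac{11t}{15},\qquad H_1=\tfrac{30-22t}{75},\ H_2=\tfrac{15-11t}{25},\ H_3=1-\tfrac{17t}{30},
\]
so that $P_A$ plays $(H_1,H_2,L_3)$ and $(L_1,L_2,H_3)$ each with probability $\tfrac12$. Then $6W_A(P_A,p_B')=c_1(d)+c_2(e)+c_3(f)$ where $c_j(x)=s(L_j-x)+s(H_j-x)$, and the hypothesis $t<\tfrac{30}{47}$ yields $d\le\tfrac{t}{3}<H_1$, $e\le\tfrac{t}{2}<H_2$, and $f\le t<H_3$, so $c_j(x)\in\{2,\tfrac32,1\}$ according as $x<L_j$, $x=L_j$, or $x>L_j$. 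The target becomes $c_1(d)+c_2(e)+c_3(f)\ge 5$, and the key computation is the trio of pairwise sums
\[
L_1+L_2=\tfrac{17t}{30},\qquad L_1+L_3=\tfrac{26t}{30},\qquad L_2+L_3=\tfrac{35t}{30},
\]
which, combined with $d\le e\le f$ and $d+e+f=t$, rule out every configuration in which two of the inequalities $d>L_1$, $e>L_2$, $f>L_3$ hold (for instance, $e>L_2$ and $f>L_3$ together would force $d<-\tfrac{5t}{30}<0$). A similar bookkeeping handles the remaining ``one strict, one equality'' hybrid configurations (e.g.\ $d>L_1$, $e=L_2$, $f<L_3$), each of which is shown infeasible from the ordering and sum constraints.

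The main obstacle is this last step on player $B$'s side: it is not enough to rule out two strict crossings of the $L_j$-thresholds; one must also eliminate every hybrid configuration in which one variable strictly exceeds its threshold and another sits exactly on a threshold (which would collapse the sum to $\tfrac92$ or below), and these are precisely the cases where $d\le e\le f$ combined with $d+e+f=t$ does the heavy lifting.
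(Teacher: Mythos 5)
Your proposal is correct and follows essentially the same route as the paper: the same admissibility check, the same expansion of $6W_A$ into $s(\cdot)$ terms with the bound $a\ge\tfrac{t}{3}$, $b\ge\tfrac{t}{3}$, $c\ge t$ forcing $\tfrac{5t}{3}\le 1$ on $A$'s side, and the same use of $t<\tfrac{30}{47}$ to reduce $B$'s side to showing $s(L_1-d)+s(L_2-e)+s(L_3-f)\ge 2$, which the paper handles by direct casework on $d$ and $e$ rather than by enumerating infeasible threshold configurations. The only omission is the all-equalities configuration $d=L_1$, $e=L_2$, $f=L_3$ (also summing below $5$), but it is immediately infeasible since $L_1+L_2+L_3=\tfrac{13t}{10}\ne t$, and I verified that each of your remaining hybrid cases is indeed ruled out by the ordering and sum constraints as claimed.
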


\begin{proof}

  Let's begin with checking that the strategies distributed among the three battlefields are non-decreasing. First, let's check player $A$'s first strategy.
  $x_A^1$ is obviously smaller than $x_A^2$ ($x_i^j$ means the level of force player $i$ distributes on battlefield $j$):
  $$\frac{30-22t}{75}\leq \frac{30-22t}{75}\cdot\frac{3}{2}=\frac{15-11t}{25}.$$
  Since $t>\frac{3}{5}$ and $\frac{3}{5}>\frac{45}{88}$, we must have $5<88t$.
  Rearrange and we would get 
  \[\frac{15-11t}{25}<\frac{11t}{15}.\]
  Now let's check player $A$'s second strategy. Obviously, ${x_A^1}'$ is smaller than ${x_A^2}'$:
  $$\frac{2t}{15}<\frac{13t}{30}.$$
  Furthermore, since $t<1$, we can rearrange and obtain
  $$\frac{13t}{30}<1-\frac{17t}{30}.$$
  Hence, $P_A$ is legitimate. And clearly $P_B$ is legitimate.

  If player $A$ plays the pure strategy $p'_A=(a,b,1-a-b)$. Then,
  \begin{multline*}
    6W_A(P'_A,P_B) = 3W_A\left( (a,b,1-a-b),\left(\frac{t}{3},\frac{t}{3},\frac{t}{3}\right)\right)+3W_A\left((a,b,1-a-b),\left(0,0,t\right) \right)\\
    =s\left(a-\frac{t}{3}\right)+s\left(b-\frac{t}{3}\right)+s\left(1-a-b-\frac{t}{3}\right)\\
    +s\left(a-0\right)+s\left(b-0\right)+s\left(1-a-b-t\right).
  \end{multline*}
  For $W_A(p'_A,P_B)$ to be greater than $\frac{5}{6}$, none of the six terms on the right can be $0$. Hence, we obtain the following inequalities:
  \[
  \begin{array}{cc}
    a\geq\frac{t}{3} & a\geq0\\
    b\geq\frac{t}{3} & b\geq0\\
    1-a-b\geq\frac{t}{3} & 1-a-b\geq t.
  \end{array}
  \]
  Add together $a\geq\frac{t}{3}$, $b\geq\frac{t}{3}$, $1-a-b\geq t$, and we get
  $$1\geq\frac{5}{3}\cdot t.$$
  Hence, $t\geq \frac{3}{5}$, contradicting our hypothesis on $t$.
  Therefore, we must have $W_A(p'_A,P_B)\leq\frac{5}{6}$ given all pure strategy $p'_A$.

  If player $B$ plays the pure strategy $p'_B=(c,d,t-c-d)$, then,
  \begin{multline*}
    6W_A\left(P_A,p'_B\right)=3W_A\left(\left(\frac{30-22t}{75},\frac{15-11t}{25},\frac{11t}{15}\right),\left(c,d,t-c-d\right)\right)\\
    +3W_A\left(\left(\frac{2t}{15},\frac{13t}{30},1-\frac{17t}{30}\right),\left(c,d,t-c-d\right)\right)\\
    =s\left(\frac{30-22t}{75}-c\right)+s\left(\frac{15-11t}{25}-d\right)+s\left(\frac{11t}{15}-t+c+d\right)\\
    +s\left(\frac{2t}{15}-c\right)+s\left(\frac{13t}{30}-d\right)+s\left(1-\frac{17t}{30}-t+c+d\right).
  \end{multline*}
  Rearrange $t<\frac{30}{47}$ and we get
  $$\frac{30-22t}{75}>\frac{t}{3}\geq c.$$
  Rearrange $t<\frac{30}{47}$ and we also get
  $$\frac{15-11t}{25}>\frac{t}{2}\geq d.$$
  Similarly, rearrange $t<\frac{30}{47}$ and we get
  $$1-\frac{17t}{30}>t\geq t-c-d.$$
  Hence,
  $$6W_A\left(P_A,p'_B\right)=3+s\left(-\frac{4t}{15}+c+d\right)+s\left(\frac{2t}{15}-c\right)+s\left(\frac{13t}{30}-d\right)$$
  \begin{itemize}
    \item If $c<\frac{2t}{15}$,
      \begin{itemize}
        \item and if $d\geq\frac{13t}{15}$, then $c+d>\frac{4t}{15}$. So
          \begin{align*}
            6W_A\left(P_A,p'_B\right)&=3+s\left(-\frac{4t}{15}+c+d\right)+s\left(\frac{2t}{15}-c\right)+s\left(\frac{13t}{30}-d\right)\\
            &\geq3+1+1+0\\
            &=5.
          \end{align*}
        \item and if $d<\frac{13t}{15}$, then
          \begin{align*}
            6W_A\left(P_A,p'_B\right)&=3+s\left(-\frac{4t}{15}+c+d\right)+s\left(\frac{2t}{15}-c\right)+s\left(\frac{13t}{30}-d\right)\\
            &\geq 3+0+1+1\\
            &=5.
          \end{align*}
      \end{itemize}
    \item If $c=\frac{2t}{15}$, then, $d\leq \frac{t-c}{2}=\frac{13t}{30}$, and $c+d\geq2c=\frac{4t}{15}$.
      \begin{itemize}
        \item If $d=\frac{13t}{30}$, then $t-c-d=\frac{13t}{30}$. So
          \begin{align*}
            6W_A\left(P_A,p'_B\right)&=3+s\left(-\frac{4t}{15}+c+d\right)+s\left(\frac{2t}{15}-c\right)+s\left(\frac{13t}{30}-d\right)\\
            &=3+1+ \frac{1}{2} + \frac{1}{2} \\
            &=5.
          \end{align*}
        \item If $d<\frac{13t}{30}$, then
          \begin{align*}
            6W_A\left(P_A,p'_B\right)&=3+s\left(-\frac{4t}{15}+c+d\right)+s\left(\frac{2t}{15}-c\right)+s\left(\frac{13t}{30}-d\right)\\
            &\geq3+\frac{1}{2}+ \frac{1}{2} + 1\\
            &=5.
          \end{align*}
      \end{itemize}
    \item If $c>\frac{2t}{15}$, then, $c+d\geq2c>\frac{4t}{15}$, so
      \begin{align*}
        6W_A\left(P_A,p'_B\right)&=3+s\left(-\frac{4t}{15}+c+d\right)+s\left(\frac{2t}{15}-c\right)+s\left(\frac{13t}{30}-d\right)\\
        &\geq3+1+1+0\\
        &=5.
      \end{align*}
  \end{itemize}
  Hence, $W_A(P_A,p'_B)\geq\frac{5}{6}$ for any pure strategy $p'_B$.

  To conclude, player $A$ cannot increase payoff above $\frac{5}{6}$ by changing strategy, 
  and player $B$ can also not increase payoff above $\frac{1}{6}$ by changing strategy. 
  So $P_A,P_B$ form a Nash equilibrium and the equilibrium payoff, $W_3(t)$, is $\frac{5}{6}$.
\end{proof}

\begin{remark}
  To guarantee a Nash equilibrium, player $A$ can play any strategy 
  \[
  P_A=\left\{\left(\left(a,b,c\right),\frac{1}{2}\right),\left(\left(d,e,f\right),\frac{1}{2}\right)\right\}
  \]
  satisfying
  $$a+b+c=1,a\leq b\leq c, d\leq e\leq f, d+e+f=1,$$ 
  $$ a>\frac{t}{3}, b>\frac{t}{2}, f> t,$$
  $$\text{and } 2d+c\geq t, d+2e\geq t.$$
  The strategy $\left\{\left(\left(\frac{30-22t}{75},\frac{15-11t}{25},\frac{11t}{15}\right),\frac{1}{2}\right),\left(\left(\frac{2t}{15},\frac{13t}{30},1-\frac{17t}{30}\right),\frac{1}{2}\right)\right\}$ is only one of the possible ones.
\end{remark}

\begin{theorem}
  \label{thm:w3t2/3}
  $W_3(\frac{2}{3})\le \frac{4}{5}$.	
\end{theorem}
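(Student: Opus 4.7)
The plan is to exhibit a mixed strategy $P_B$ for player $B$, supported on finitely many pure strategies in $\mathfrak{B}_B$, and to verify that against every pure strategy of $A$ the expected payoff to $A$ is at most $\frac{4}{5}$; combined with \Cref{thm:uniqueNEPayoff}, this immediately yields $W_3\!\left(\frac{2}{3}\right)\le \frac{4}{5}$. Unlike the preceding theorems in this section, we do not need to exhibit a matching $P_A$, since only an upper bound is claimed.

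Following the template of \Cref{thm:w3t6/11_18/31} and \Cref{thm:w3t3/5_30/47}, the candidate $P_B$ would be a finite mixture of pure strategies summing to $\frac{2}{3}$ with nondecreasing components. Natural atoms to include are the extreme points $\bigl(0,0,\tfrac{2}{3}\bigr)$, $\bigl(0,\tfrac{1}{3},\tfrac{1}{3}\bigr)$, and $\bigl(\tfrac{2}{9},\tfrac{2}{9},\tfrac{2}{9}\bigr)$, corresponding respectively to concentrating force on the last battlefield, splitting it evenly between the last two, and spreading it uniformly. To prevent $A$ from sweeping many atoms at once, I would supplement these with intermediate atoms such as $\bigl(0,\tfrac{1}{6},\tfrac{1}{2}\bigr)$ whose third coordinate lies strictly between $\tfrac{1}{3}$ and $\tfrac{2}{3}$. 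For the verification I would fix an arbitrary $p_A = (a,b,c)$ with $a \le b \le c$ and $a+b+c = 1$, evaluate $W_A(p_A, P_B)$ as a probability-weighted sum of terms
\[
\frac{1}{3}\bigl[s(a - b^{(i)}_1) + s(b - b^{(i)}_2) + s(c - b^{(i)}_3)\bigr],
\]
partition $\mathfrak{B}_A$ into regions where each $s$-term is constant (using the orderings of $a,b,c$ against the finitely many values $\{b^{(i)}_j\}$), and check the bound $\tfrac{4}{5}$ region by region.

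The main obstacle is the correct design of $P_B$. Several natural candidates fail against responses such as $(\epsilon, b, c)$ with $\epsilon$ small, $b$ slightly above $\tfrac{1}{3}$, and $c$ slightly above $\tfrac{7}{12}$, which can sweep all three battlefields against most atoms simultaneously; similarly, responses like $\bigl(\tfrac{1}{4},\tfrac{3}{8},\tfrac{3}{8}\bigr)$ force $P_B$ to place substantial mass on atoms with $b_3 > \tfrac{3}{8}$, since on the other two battlefields $B$ simply cannot outbid such $A$. Balancing the weights and atoms of $P_B$ so that every problematic pure response of $A$ loses at least $\tfrac{3}{5}$ of a battlefield in expectation is delicate, and this is plausibly why the theorem records only an inequality rather than an exact value. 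Once a suitable $P_B$ is chosen, the final verification is a finite but tedious case analysis of the sort already seen in the preceding theorems of this section.
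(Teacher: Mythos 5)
Your overall strategy is exactly the one the paper uses: exhibit a finitely supported mixed strategy $P_B$ for player $B$ and check that every pure response of $A$ earns at most $\frac{4}{5}$, which (together with \Cref{thm:uniqueNEPayoff}) bounds the equilibrium payoff from above without needing a matching $P_A$. That framing is correct.

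The gap is that you never actually produce the object on which the whole theorem rests. The entire mathematical content here is a concrete $P_B$ that achieves the bound, and your proposal explicitly defers this (``the main obstacle is the correct design of $P_B$'') while only floating candidate atoms. Those candidates do not work: against the uniform mixture over $\bigl(0,0,\tfrac{2}{3}\bigr)$, $\bigl(0,\tfrac{1}{3},\tfrac{1}{3}\bigr)$, $\bigl(\tfrac{2}{9},\tfrac{2}{9},\tfrac{2}{9}\bigr)$, $\bigl(0,\tfrac{1}{6},\tfrac{1}{2}\bigr)$, the pure response $\bigl(\tfrac{2}{9}+\epsilon,\tfrac{1}{3}+\epsilon,\tfrac{4}{9}-2\epsilon\bigr)$ scores $\tfrac{1}{4}\bigl(\tfrac{2}{3}+1+1+\tfrac{2}{3}\bigr)=\tfrac{5}{6}>\tfrac{4}{5}$, and similar sweeps defeat the other natural reweightings. (Your own illustrative response $(\epsilon,b,c)$ with $b$ just above $\tfrac{1}{3}$ and $c$ just above $\tfrac{7}{12}$ also cannot have $\epsilon$ small, since the coordinates must sum to $1$.) The paper resolves this by exhibiting the specific uniform mixture over the five atoms $\bigl(0,\tfrac{1}{16},\tfrac{29}{48}\bigr)$, $\bigl(0,0,\tfrac{2}{3}\bigr)$, $\bigl(\tfrac{1}{16},\tfrac{1}{16},\tfrac{13}{24}\bigr)$, $\bigl(\tfrac{1}{8},\tfrac{13}{48},\tfrac{13}{48}\bigr)$, $\bigl(\tfrac{5}{24},\tfrac{11}{48},\tfrac{11}{48}\bigr)$, each with probability $\tfrac{1}{5}$, and verifying the bound over the finitely many order-type regions by computer. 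Until you commit to a specific mixture and carry out that finite case check, the argument establishes nothing beyond a plan of attack.
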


\begin{proof}

  Let $B$ play the strategy
  \begin{multline*}
    P_B=\left\{
    \left(\left(0,\frac{1}{16},\frac{29}{48}\right),\frac{1}{5}\right),
    \left(\left(0,0,\frac{2}{3}\right),\frac{1}{5}\right), 
    \left(\left(\frac{1}{16},\frac{1}{16},\frac{13}{24}\right),\frac{1}{5}\right),\right. \\
    \left.
    \left(\left(\frac{1}{8},\frac{13}{48},\frac{13}{48}\right),\frac{1}{5}\right),
    \left(\left(\frac{5}{24},\frac{11}{48},\frac{11}{48}\right),\frac{1}{5}\right)
    \right\}
  \end{multline*}
  and let $A$ play any pure strategy $p$, then we verified using a computer that the payoff for $A$, $W_A\left( p, P_B \right)$, is at most $\frac{4}{5}$.
\end{proof}

\begin{theorem}
  \label{thm:w3t5/6}
  $W_3(\frac{5}{6})\ge\frac{2}{3}$.
\end{theorem}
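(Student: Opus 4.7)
The plan is to exhibit a single pure strategy $p_A = \left(\frac{1}{6}, \frac{1}{3}, \frac{1}{2}\right)$ for player $A$, which is feasible because its entries are nondecreasing and sum to $1$, and to verify that $W_A(p_A, p'_B) \geq \frac{2}{3}$ for every pure strategy $p'_B = (d,e,f)$ of $B$ (with $0 \leq d \leq e \leq f$ and $d + e + f = \frac{5}{6}$). Taking expectations over any mixed strategy of $B$ then gives $W_A(p_A, P_B) \geq \frac{2}{3}$ for every $P_B$; hence in any Nash equilibrium $(P_A^{*}, P_B^{*})$, we have $W_A(P_A^{*}, P_B^{*}) \geq W_A(p_A, P_B^{*}) \geq \frac{2}{3}$, and by \Cref{thm:uniqueNEPayoff} this is the unique equilibrium payoff $W_3\!\left(\frac{5}{6}\right)$.

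The central step is to show that $p_A$ loses at most one battlefield against any feasible $p'_B$. I would argue by contradiction: assume $p_A$ loses both battlefields $i$ and $j$. In the case $(i,j)=(1,2)$ we have $d > \frac{1}{6}$ and $e > \frac{1}{3}$, so $f = \frac{5}{6} - d - e < \frac{1}{3} < e$, contradicting $e \leq f$. In the case $(i,j)=(1,3)$, $d > \frac{1}{6}$ and $f > \frac{1}{2}$ force $e < \frac{1}{6} < d$, contradicting $d \leq e$. In the case $(i,j)=(2,3)$, $e > \frac{1}{3}$ and $f > \frac{1}{2}$ force $d < 0$, contradicting $d \geq 0$.

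With at most one loss established, I would finish by casework on the number of ties, meaning battlefields on which $A$'s coordinate equals $B$'s. Zero ties gives at least two wins, so the payoff is at least $\frac{2}{3}$. Three ties are impossible since the totals differ ($1 \neq \frac{5}{6}$). For exactly two ties the two tied coordinates of $p'_B$ are pinned down, the third is determined by the budget, and one verifies directly that $A$ wins the remaining battlefield; for instance, ties on battlefields $1$ and $3$ force $d = \frac{1}{6}$ and $f = \frac{1}{2}$, hence $e = \frac{1}{6}$, and $A$'s $\frac{1}{3}$ beats $B$'s $\frac{1}{6}$ on battlefield $2$, yielding payoff exactly $\frac{2}{3}$.

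The main obstacle is the case of exactly one tie, where a priori the pattern one tie + one win + one loss would give payoff only $\frac{1}{2}$ and hence must be excluded. There are six such configurations (three positions for the tie, two choices for which remaining battlefield is the loss), and I expect each to fall to the same style of argument used in the two-loss step: substituting the value fixed by the tied coordinate into $d + e + f = \frac{5}{6}$ and combining with the ordering $d \leq e \leq f$ forces one of the remaining coordinates of $p'_B$ to lie strictly on the wrong side of $p_A$'s corresponding coordinate, a contradiction. Once all six sub-cases are ruled out, the only viable one-tie pattern is one tie + two wins, with payoff $\frac{5}{6} \geq \frac{2}{3}$, completing the proof.
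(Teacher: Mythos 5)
Your proposal is correct, and it is built on exactly the same idea as the paper: let player $A$ play the single pure strategy $\left(\tfrac16,\tfrac13,\tfrac12\right)$, check that it earns at least $\tfrac23$ against every feasible pure strategy of $B$, and conclude via the Nash condition for $A$ and \Cref{thm:uniqueNEPayoff}. The genuine difference is in how that check is carried out: the paper simply asserts that the inequality ``was verified using a computer,'' whereas you supply a human-readable verification, using the budget $d+e+f=\tfrac56$ together with the ordering $0\le d\le e\le f$ to rule out two losses, to rule out all six ``one tie plus one loss'' patterns, and to confirm that every two-tie configuration forces a win on the remaining battlefield (three ties being impossible since $1\ne\tfrac56$). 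I checked the sub-cases you only sketch --- for instance, a tie on battlefield $2$ with a loss on battlefield $3$ forces $d=\tfrac56-\tfrac13-f<0$, and a tie on battlefield $3$ with a loss on battlefield $1$ forces $e<\tfrac16<d$ --- and they all go through; the bound is sharp, with payoff exactly $\tfrac23$ attained e.g.\ at $p'_B=\left(0,\tfrac13,\tfrac12\right)$. So your route buys a self-contained proof that a reader can verify line by line, at the cost of a modest case analysis; the paper's route buys brevity but leaves the key inequality as an unverifiable computational claim.
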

\begin{proof}
  Let $A$ play the strategy 
  $P_A=\left(\frac{1}{6},\frac{1}{3},\frac{1}{2}\right)$,
  and let $B$ play any pure strategy $p$, then we verified using a computer that the payoff for $A$, $W_A\left( P_A, p \right)$, is at least $\frac{2}{3}$.
\end{proof}

Note that in this case where $t\neq1$ and in the general case for $n=2$ (\Cref{sec:w2t}), there are Nash equilibrium strategies with atoms. This is behavior very different from what we saw in \Cref{sec:uniqueCDFACB113}, and also very different from what we proved about Nash equilibria of the game $ACB(1,1,n)$ where $n\ge3$ in \Cref{thm:noAtom}.
\section{Open problems} \label{sec:open}

Still much remains unknown about the Asymmetric Colonel Blotto game in the general case. For example, what would a Nash equilibrium for the game $ACB(1,1,4)$ look like? Or a Nash equilibrium for the game $ACB(1,1,n)$ where $n\ge 5$?
The methods used to prove the uniqueness of the marginal distributions of the game $ACB(1,1,3)$ in \Cref{sec:uniqueCDFACB113} cannot be used here since these methods only prove the uniqueness of the marginal distributions given a Nash equilibrium strategy, so these methods do not work when we cannot find a Nash equilibrium in the first place.
It is hard to find a Nash equilibrium for the game $ACB(1,1,4)$, although one can be approximated by means of computer simulation. What's more, we can show that the marginal distributions of Nash equilibrium strategies cannot be uniform. This makes it difficult to guess the correct Nash equilibrium strategy.

Another problem is to determine how the unique equilibrium payoff varies in the game $ACB(1,t,n)$ as $t$ varies continuously in the general case. As we have shown in \Cref{sec:w2t}, $W_2(t)$ is locally constant and discontinuous as a function of $t$. This is quite a surprising result, as it indicates that there are phase changes in the game $ACB(1,t,2)$ as $t$ changes.
Our partial results in \Cref{sec:w3t} also indicate that $W_3(t)$ is a discontinuous function (\Cref{thm:w3t6/11} and \Cref{thm:w3t6/11_18/31}). Computer simulation of discrete cases also indicates that sometimes it is not differentiable where the function itself is continuous. Maybe the phase changes in this case correspond to discontinuous jumps in the equilibrium strategies. This can be illustrated by the drastic difference between the equilibrium strategies in \Cref{sec:w3t} and those in \Cref{sec:uniqueCDFACB113}.
Is it possible to find all the critical values of $t$ where these phase changes occur?

Yet another fundamental question left unanswered is the existence of Nash equilibria for the game $ACB(X_A,X_B,n)$ in the general case. We have discussed Nash equilibria in special cases, but we have not given a proof that guarantees the existence of Nash equilibria in the general case.

\bibliographystyle{alpha}
\bibliography{cb}

\end{document}